\newtheorem{theorem}{Theorem}
\newtheorem{proposition}[theorem]{Proposition}
\newtheorem{corollary}[theorem]{Corollary}
\newtheorem{claim}[theorem]{Claim}
\newtheorem{lemma}[theorem]{Lemma}
\newtheorem{conjecture}[theorem]{Conjecture}
\begin{document}

\title{Skolem Meets Bateman--Horn}

\author{Florian Luca}
\address{Mathematics Division, Stellenbosch University, Stellenbosch, South Africa}
\email{fluca@sun.ac.za}

\author{James Maynard}
\address{Mathematical Institute, University of Oxford, Oxford, UK}
\email{james.alexander.maynard@gmail.com}

\author{Armand Noubissie}
\address{Institute for Analysis and Number Theory, Graz University of
  Technology, Graz, Austria}
\email{armand.noubissie@tugraz.at}

\author{Jo\"{e}l Ouaknine}
\address{Max Planck Institute for Software Systems\\
Saarland Informatics Campus, Saarbr\"ucken, Germany}
\email{joel@mpi-sws.org}

\author{James Worrell}
\address{Department of Computer Science, University of Oxford\\
  Oxford, UK}
\email{jbw@cs.ox.ac.uk}

\keywords{Linear recurrence sequences, Skolem problem, decidability, Universal
  Skolem Sets, exponential-polynomial Diophantine equations, sieve methods,
  uniform Bateman--Horn conjecture}

\begin{abstract}
The Skolem Problem asks to determine whether a given integer linear
  recurrence sequence has a zero term. This problem arises across a
  wide range of topics in computer science, including loop
  termination, formal languages, automata theory, and control
  theory.
  Decidability is
  notoriously open; the state of the art is a decision procedure for
  recurrences of order at most~4: an advance achieved some 40 years
  ago, based on Baker's theorem on linear forms in logarithms of 
  algebraic numbers.

  A new approach to the Skolem Problem was recently initiated
  in~\cite{LOW,LucaOW22} via the notion of a Universal Skolem Set---a
  set $\mathcal S$ of positive integers such that it is decidable
  whether a given non-degenerate linear recurrence sequence has a zero
  in $\mathcal S$.  Clearly, proving decidability of the Skolem
  Problem is equivalent to showing that $\mathbb{N}$ itself is a
  Universal Skolem Set.  The main contribution of the present paper is
  to construct a Universal Skolem Set
  that has lower density at least $1/8$.
  We show moreover that this set has density $1$ subject to
  Martin's uniform formulation of the Bateman--Horn
  conjecture.  The latter is a far-reaching quantitative
  hypothesis concerning the frequency of primes among the values of
  systems of polynomials.
\end{abstract}

\maketitle

\section{Introduction}
\label{sec:1}
A linear recurrence sequence (LRS) $\boldsymbol u =\langle u_n\rangle_{n=0}^\infty$
is a sequence of integers satisfying a recurrence of the form
\begin{equation}
\label{eq:u}
u_{n+k}=a_0u_{n+k-1}+\cdots+a_{k-1} u_n \qquad(n\in\mathbb{N}) \, ,
\end{equation}
where the coefficients $a_0,\ldots,a_{k-1}$ are integers with
$a_{k-1}\neq 0$.
Equivalently, an integer sequence $\boldsymbol u$ is an LRS if it admits
an exponential-polynomial closed form:
\begin{equation}
u_n = \sum_{i=1}^s Q_i(n) \alpha_i^n \,  ,
\label{eq:ep}
\end{equation}
where $\alpha_1,\ldots,\alpha_s$ are the distinct roots of the
characteristic polynomial of the recurrence~\eqref{eq:u}  and each $Q_i$ is a polynomial
with algebraic coefficients.

The celebrated theorem of Skolem, Mahler, and
Lech~\cite{Lec53,Mah35,Sko34} asserts that the set
$\{n\in\mathbb{N}:u_n=0\}$ of zero terms of an LRS is the union of a
finite set and finitely many arithmetic progressions.  This result can
be refined using the notion of \emph{non-degeneracy} of an LRS\@.  An
LRS is non-degenerate if in its minimal recurrence no quotient of
distinct characteristic roots is a root of unity.  Any given LRS can be
effectively decomposed as the interleaving of finitely many
non-degenerate sequences, some of which may be identically
zero~\cite[Theorem 1.2]{BOOK}.\footnote{More precisely,
the number of subsequences is the least common multiple of periods of
roots of unity arising as quotients of distinct characteristic roots.}
The core of the Skolem-Mahler-Lech
Theorem is the fact that a non-degenerate LRS that is not identically
zero has finitely many zero terms.  Unfortunately, all known proofs
are ineffective---it is not known how to compute the finite set of
zeros of a given non-degenerate linear recurrence sequence;
equivalently, it is not known how to decide whether an arbitrary given
LRS has a zero.

The problem of determining whether a given LRS has a zero is known as
the Skolem Problem.  This has been memorably characterised by
Tao~\cite{Tao08} as \emph{``the halting problem for linear
  automata''}.  In fact, the Skolem Problem has been recognised as a
fundamental decision problem in a number of different areas of
theoretical computer science, including loop
termination~\cite{OuaknineW15}, weighted automata and formal power
series (see~\cite[Section 6.4]{Berstel2010NoncommutativeRS},
\cite[Section 4.2]{RS94}, and~\cite[Section
III.8]{Salomaa1978Automata}), matrix semigroups~\cite{BellPS21},
stochastic systems and probabilistic
programs~\cite{AgrawalAGT15,BJK20}, and control theory~\cite[Section
3]{BlondelT00}.  However, progress towards determining decidability of
the problem has been limited.  The state of the art is that
decidability is known for recurrences of order\footnote{The
  \emph{order} of an LRS is the smallest value of $k$ for which the
  LRS satisfies a recurrence of the form~(\ref{eq:u}).}  at most
4~\cite{MST84,Ver85}---a result established four decades
ago by means of Baker's celebrated theorem on linear forms in logarithms of algebraic
numbers.

Recently~\cite{BiluLNOPW22} gave a procedure to decide the Skolem
Problem for the class of simple LRS (those with simple characteristic
roots) of any order, subject to two classical conjectures about the
exponential function, namely the $p$-adic Schanuel Conjecture and the
exponential local-global principle.   These two
  conjectures are needed to guarantee termination of the procedure,
  but the output (a list of all zeros of a given non-degenerate LRS)
  is unconditionally correct.
The present paper follows a
different approach from~\cite{BiluLNOPW22}, via the notion of
\emph{Universal Skolem Set}.  This is an infinite set
$\mathcal{S} \subseteq \mathbb{N}$ for which there is an effective
procedure that, given a non-degenerate LRS
$\langle u_n\rangle_{n=0}^\infty$, outputs the finite set
$\{ n \in \mathcal{S} : u_n=0\}$.  Evidently, establishing
decidability of the Skolem Problem is equivalent to showing that
$\mathbb{N}$ is a Universal Skolem Set.  Towards this objective, and
noting that the class of Universal Skolem Sets is closed
under various operations such as finite shifts and finite unions, it
is natural to examine diverse means by which to construct Universal
Skolem Sets, and particularly such sets of high density.\footnote{The
  \emph{density} of an infinite set $\mathcal{S}$ of positive integers
  is the limit (if it exists) of the proportion of elements of
  $\mathcal{S}$ among all integers from $1$ to $n$ as $n$ tends to
  infinity. The \emph{lower density} of $\mathcal{S}$ is defined
  analogously, substituting the limit inferior to the limit.}
Pursuing this line of research leads in the present paper to new connections
between the Skolem Problem and classical questions on the distribution
of prime numbers that appear to be unrelated to the $p$-adic
techniques of~\cite{BiluLNOPW22}.

The paper~\cite{LOW} introduced the notion of Universal Skolem Set
(the terminology is inspired by the notion of Universal Hilbert
Set~\cite{Bilu1996}) and exhibited such a set of density zero.
Subsequently~\cite{LucaOW22} produced a set
$\mathcal{S}_0 \subseteq \mathbb{N}$ of positive lower density and an
effective procedure that, given a non-degenerate \emph{simple} LRS
$\langle u_n\rangle_{n=0}^\infty$, computes the set
$\{ n \in \mathcal{S}_0: u_n=0\}$ of zeros of the LRS that lie in
$\mathcal{S}_0$.  The present paper further develops ideas introduced
in~\cite{LucaOW22} and contains two significant advances.  First, we
construct a Universal Skolem Set $\mathcal{S}$ of positive lower
density, i.e., we lift the restriction to simple LRS\@.  
In addition, we give an explicit upper bound for the
largest zero in $\mathcal{S}$ of a given LRS\@.  Such explicit bounds
were lacking in~\cite{LucaOW22}.  The second main contribution is to show that
$\mathcal S$ has lower density at least $1/8$
unconditionally and has density $1$ subject to Martin's
uniform formulation of the Bateman--Horn conjecture, namely Hypothesis
UH~\cite[Hypothesis UH]{Martin02}.  Bateman--Horn is a central unifying
hypothesis concerning the frequency of prime numbers among the values
of a system of polynomials and generalises many classical results and
conjectures, including Hardy and Littlewood's twin primes conjecture
and Schinzel's Hypothesis
H~\cite{BH,ALETHEIAZOMLEFER2020430,Baier02,Lang96}.  The uniform
formulation is quantitative and allows the coefficients of the
polynomials to vary with the main parameter.

Key ingredients of the present paper are deep results of Schlickewei
and Schmidt \cite{SchlickeweiS00} and of Amoroso and Viada~\cite{AV09}
that yield explicit bounds on the number of solutions of certain
exponential-polynomial Diophantine equations.  Indeed, it is striking
that while there is no known method to determine the zero set of a given
non-degenerate LRS, thanks to the above-mentioned results there are
fully explicit upper bounds (depending only on the order of the
recurrence) on the cardinality of its zero set.  Such bounds do not
suffice to solve the Skolem Problem, which would
  require either knowing the precise number of zeros or having
effective upper bounds on the \emph{magnitude} of the zeros of a given
LRS\@.  The main idea of our approach is to leverage explicit upper
bounds on the number of zeros of polynomial-exponential equations to
obtain bounds on the magnitude of the zeros of LRS\@.
We sketch this idea below, eliding many subtle
  technical details.

Roughly speaking,
our Universal Skolem Set $\mathcal S$ consists of positive integers
$n$ that admit sufficiently many \emph{representations} of the form
$n=Pq+a$, with $P,q$ prime and $q,a$ logarithmic in $n$.
More precisely, we ask that $n\in\mathcal S$ have at
  least $\Phi(n)$ representations for some fixed function $\Phi :
  \mathbb N \rightarrow \mathbb N$ that diverges slowly to infinity.
  For the LRS
  $\boldsymbol u$ shown in~\eqref{eq:ep}, $u_n=0$ if
  and only if $\sum_{i=1}^s Q_i(n)\alpha_i^n =0$.  Given a
  representation $n=Pq+a$, substituting  for $n$ into
  the preceding equation and using a generalisation of Fermat's Little
  Theorem, we are able to show that
\begin{equation}
  \sum_{i=1}^s Q_i(a)
  \sigma(\alpha_i)^q\alpha_i^a=0
\label{eq:comp1}
\end{equation}
for some permutation $\sigma$ of the $\alpha_i$.  We
call~\eqref{eq:comp1} a \emph{companion equation}; its form is
determined by the permutation $\sigma$.  Thus, if
$u_n=0$ then each representation of $n$ yields a solution of one of a
finite number of companion equations.  Adapting the above-mentioned
techniques from~\cite{AV09, SchlickeweiS00} we obtain explicit upper
bounds on the number of solutions $q,a \in \mathbb N$ of each
companion equation~\eqref{eq:comp1}.  This in turn yields an 
upper bound on the number of representations of $n$.
Since $n\in \mathcal S$ has at
least $\Phi(n)$ representations and since $\Phi$ diverges to infinity,
we can transfer these upper bounds on the number of representations to
upper bounds on the \emph{magnitude} of~$n$.  We thereby obtain an
explicit upper bound on the magnitude of any zero of the LRS that lies
in the set $\mathcal S$.  In general, we consider that such a transfer
principle---between upper bounds on the multiplicity of solutions of
exponential Diophantine equations and
upper bounds on the magnitude of solutions of such equations---is a promising new
direction to make progress on the Skolem Problem.
Moreover, by choosing $\Phi$ to be sufficiently slowly growing we can
ensure that $\mathcal S$ has positive density.
However, standard
heuristics about the distribution of primes suggest that there are
arbitrarily large integers with no representations at all, which
suggests that our Universal Skolem Set $\mathcal S$ does not contain
all integers.

In terms of proof techniques, a major difference between the present
paper and~\cite{LucaOW22} is that the latter used an existing upper
bound of~\cite{SchlickeweiS00} on the number of solutions of a certain
class of exponential Diophantine equations to bound the number of
solutions of the companion equation.  To handle the case of
non-simple LRS it appears that one cannot use existing results ``off
the shelf'' and must instead adapt the techniques
of~\cite{AV09,ESS,SchlickeweiS00} to our setting.  This is the subject
of Section~\ref{sec:2}, while Section~\ref{sec:density} analyses the density
of the set $\mathcal S$.

\section{Background}
\label{sec:background}
In this section we briefly summarise some notions and results in number theory that we will need.
\subsection{Asymptotic notation}
Given functions $f,g:\mathbb N \rightarrow \mathbb N$, we use the Vinogradov notation
$f \ll g$ to denote that there exists a constant $C>0$ such that $f(n)
\leq C g(n)$ for all but finitely many $n$.  We write $f \sim g$ to
mean that $\lim_{n\rightarrow \infty}\frac{f(n)}{g(n)}=1$.
The same notation is used for real-valued quantities
depending on a real parameter; a subscript, as in $\ll_B$ or
$O_B(\cdot)$, indicates that the implied constant depends on the
subscripted parameter.
For $x>1$ and a positive integer $k\ge 1$, we inductively
define the iterated logarithm function $\log_k x$ as follows:
$\log_1 x:=\log x$, and for $k\ge 2$ we set
$\log_k x:=\max\{1,\log_{k-1} (\log x)\}$.  Thus, for $x$ sufficiently
large, $\log_k x$ is the $k$-fold iterate of $\log$ applied to $x$.  We
omit the subscript when $k=1$.  Dually, $\exp_k$ denotes the $k$-fold
iteration of the exponential function.
\subsection{Number fields}

In this section we recall some facts about number fields.  We refer
to~\cite[Chapter 10]{Baker} for more details. 
Let $\mathbb{K}$ be a finite Galois extension of $\mathbb{Q}$.  We
denote by $\mathrm{Gal}(\mathbb{K}/\mathbb{Q})$ the group of
automorphisms of $\mathbb{K}$.  The ring of algebraic integers in
$\mathbb{K}$ is denoted $\mathcal{O}_{\mathbb{K}}$.  The \emph{norm}
of $\alpha \in \mathbb{K}$ is defined by
\begin{gather*}
N_{\mathbb{K}/\mathbb{Q}}(\alpha) = \prod_{\sigma \in 
  \mathrm{Gal}(\mathbb{K}/\mathbb{Q})} \sigma(\alpha) \, .
\end{gather*}
The 
\emph{norm} $N_{\mathbb{K}/\mathbb{Q}}(\alpha)$ is rational for all 
$\alpha\in\mathbb{K}$ and  $N_{\mathbb{K}/\mathbb{Q}}(\alpha)$ is 
an integer if $\alpha \in \mathcal{O}_{\mathbb{K}}$.
Given $\alpha \in \mathbb K$, $|\alpha|$ denotes its modulus as a
complex number.
  By definition of the norm
we have
$|N_{\mathbb{K}/\mathbb{Q}}(\alpha)| \leq H^{d_{\mathbb{K}}}$, where
$d_{\mathbb{K}}$ is the degree of $\mathbb{K}$ and
\begin{gather*}
  H  := \max_{\sigma \in \mathrm{Gal}(\mathbb{K}/\mathbb{Q})}
  |\sigma(\alpha)|
\end{gather*}
is the \emph{house} of $\alpha$.  Furthermore,
given a rational prime $p \in \mathbb{Z}$ and a prime ideal factor
$\mathfrak{p}$ of $p$ in $\mathcal{O}_{\mathbb{K}}$, we have
$p \mid N_{\mathbb{K}/\mathbb{Q}}(\alpha)$ for all
$\alpha \in \mathfrak{p}$.

We say that $\alpha,\beta \in \mathbb{K}^\times$ are \emph{multiplicatively
  dependent} if there exist integers $k,\ell$, not both zero, such
that $\alpha^k=\beta^\ell$.
\begin{lemma}
\label{lem:multdep}
Let $\sigma \in \mathrm{Gal}(\mathbb{K}/\mathbb{Q})$ and let
$\alpha\in\mathbb{K}^\times$ not be a root of unity.  If
$\alpha^k=\sigma(\alpha)^\ell$ for integers $k,\ell$, then $k=\pm\ell$.
\end{lemma}
\begin{proof}
Repeatedly applying $\sigma$ to the relation
$\alpha^k=\sigma(\alpha)^\ell$ we deduce that
$\alpha^{k^d} = (\sigma^d(\alpha))^{\ell^d}$ for all $d \geq 1$.  In
particular, choosing $d$ to be the order of $\sigma$ we get that
$\alpha^{k^d}=\alpha^{\ell^d}$ and hence $k=\pm \ell$.
\end{proof}
\subsection{Heights}

Let $\overline{\mathbb Q}$ denote the field of algebraic numbers.
For a positive integer $n$, we denote by
$h : \overline{\mathbb{Q}}^n \rightarrow [0,\infty)$ the
\emph{absolute logarithmic Weil height} on $n$-dimensional affine
space.  We refer to~\cite[Chapter 1]{bombierigubler} for the formal
definition.  Here we will need only the following properties
(see~\cite[Chapter 1]{bombierigubler} and~\cite[Corollary 2]{Vou}).

\begin{proposition}
Let $\alpha,\beta,\alpha_1,\ldots,\alpha_s$ be non-zero algebraic numbers
for some $s \geq 2$.   Then
\begin{enumerate}
  \item $h(\alpha) \leq \log H$, where $H$ is the house of $\alpha$;
  \item $h(\alpha)=0$ if $\alpha$ is a root of unity and if $\alpha$
    has degree $d\geq 2$ and is not a root of unity then
  $ h(\alpha) > \frac{2}{d(\log(3d))^3}$;
\item $h(\alpha^m) = |m| h(\alpha)$ for all $m \in \mathbb{Z}$.
\item $h(\alpha\beta) \leq h(\alpha)+h(\beta)$; 
\item $h(\alpha_1+\cdots+\alpha_s) \leq 
  h(\alpha_1)+\cdots+h(\alpha_s)+\log s$;
\item $h(f(\alpha)) \leq (d+1)(h(\alpha)+c+\log 2)$
    if $f(x)\in \overline{\mathbb Q}[x]$ is a
  polynomial of degree~$d$ whose coefficients have height at most $c$;
\item $h(\alpha_1,\ldots,\alpha_s)\geq
  h(\alpha_i/\alpha_j)$ for $i ,j\in \{1,\ldots,s\}$.
\end{enumerate}
\label{thm:height_bounds}
\end{proposition}

The following elementary inequality will be useful in relation to 
calculations involving heights. 
\begin{proposition}
  For all $c,X>1$, if $\sqrt{\log X} < c \log\log X$ then
  one has $X<\exp((4c\log(2c))^2)$.
\label{prop:elementary}
\end{proposition}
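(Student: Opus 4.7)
The plan is to substitute $y = \sqrt{\log X}$ so as to reduce both the hypothesis and the conclusion to a single inequality in $y$ (with $c$ a parameter). Using $\log X = y^2$ and $\log\log X = 2\log y$, the hypothesis becomes $y < 2c\log y$, which in particular forces $y > 1$, since the right-hand side is non-positive for $y \le 1$. The desired conclusion $X < \exp\bigl((4c\log(2c))^2\bigr)$ becomes, after taking logs and square roots, $y < 4c\log(2c)$.

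Next I would study the function $h(y) = y - 2c\log y$ on $(0,\infty)$. Its derivatives $h'(y) = 1 - 2c/y$ and $h''(y) = 2c/y^2 > 0$ show that $h$ is strictly convex with a unique minimum at $y = 2c$. Consequently the sub-level set $\{y > 0 : h(y) < 0\}$ is an open interval $(y_1, y_0)$ containing $2c$ (and if this set is empty then the proposition is vacuously true). It therefore suffices to show $y_0 \le y^{\star} := 4c\log(2c)$, because then every $y$ satisfying the hypothesis lies in $(y_1, y_0)$ and so is less than $y^{\star}$.

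To establish $y_0 \le y^{\star}$ I would verify the two inequalities $y^{\star} \ge 2c$ and $h(y^{\star}) \ge 0$, and then invoke the strict monotonicity of $h$ on $[2c,\infty)$ (which forces $y^\star \ge y_0$ since $h(y_0) = 0$). The first inequality reduces to $\log(2c) \ge 1/2$, which holds comfortably for $c > 1$. For the second, a short expansion gives
\[
h(y^{\star}) = 4c\log(2c) - 2c\log\bigl(4c\log(2c)\bigr) = 2c\bigl(\log c - \log\log(2c)\bigr),
\]
so $h(y^{\star}) \ge 0$ amounts to $c \ge \log(2c)$; this holds at $c = 1$ (since $1 > \log 2$) and persists for $c \ge 1$ because $\tfrac{d}{dc}\bigl(c - \log(2c)\bigr) = 1 - 1/c \ge 0$ on that range.

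There is no real obstacle here: the only step that requires any thought is the substitution $y = \sqrt{\log X}$, which converts the transcendental-looking statement into a one-variable convex optimisation that is trivialised by elementary calculus.
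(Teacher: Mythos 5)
Your proof is correct, and note that the paper itself states the proposition without proof (it is labelled as ``elementary''), so there is no printed argument to compare against. Your substitution $y=\sqrt{\log X}$, which converts both sides to a single-variable inequality, together with the observation that $h(y)=y-2c\log y$ is strictly convex with minimum at $y=2c$, reduces everything to checking that $y^\star:=4c\log(2c)$ lies at or to the right of the larger root of $h$; the two required facts ($y^\star\ge 2c$ and $h(y^\star)\ge 0$, the latter equivalent to $c\ge\log(2c)$) are verified cleanly, and all the calculus steps and the algebraic simplification of $h(y^\star)$ to $2c(\log c-\log\log(2c))$ check out. You also correctly handle the degenerate cases: the hypothesis forces $y>1$, and if the sublevel set $\{h<0\}$ is empty the claim is vacuous. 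This is exactly the kind of self-contained elementary argument the authors evidently had in mind.
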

\begin{proof}
  Suppose that $X\ge\exp((4c\log(2c))^2)$, so that
  $s:=\sqrt{\log X}\ge 4c\log(2c)>2c$.  The function $s\mapsto
  s-2c\log s$ is increasing for $s>2c$, and its value at
  $s=4c\log(2c)$ equals $2c\log\bigl(c/\log(2c)\bigr)\ge0$, since
  $c\ge\log(2c)$ for all $c>1$.  Hence $s\ge2c\log s$, that is,
  $\sqrt{\log X}\ge c\log\log X$.
\end{proof}

The hypothesis of Proposition~\ref{prop:elementary} will always arise
with an explicitly bounded value of $c$, in which case we will use the
following consequence.

\begin{corollary}
\label{cor:elementary}
Let $k\ge2$ and suppose that $\sqrt{\log X}<c\log_2X$ for some $c>1$
with $\log c\le20k^4$.  Then $X<\exp_2(208k^4)$; in particular,
$2X<\exp_5(10^{10}k^6)$.
\end{corollary}
\begin{proof}
  If $X<e^e$ the first conclusion is trivial; otherwise
  $\log_2X=\log\log X$ and Proposition~\ref{prop:elementary} gives
  $X<\exp((4c\log(2c))^2)$.  From $\log c\le20k^4$ we get
  $c<\exp(20k^4)$ and $\log(2c)<21k^4$, so
  \[
    (4c\log(2c))^2<\bigl(84k^4\exp(20k^4)\bigr)^2<\exp(208k^4),
  \]
  where the last inequality uses $84k^4<\exp(84k^4)$.  Hence
  $X<\exp_2(208k^4)$.  Finally, $208k^4<10^{10}k^6$ and
  $2\exp_2(N)<\exp_5(N)$ for all $N\ge1$, whence
  $2X<\exp_5(10^{10}k^6)$.
\end{proof}

We shall also use the following elementary ``absorption'' lemma, which
allows the various terminal bounds in the proof of
Theorem~\ref{thm:1} to be compared with the bound of that theorem in a
uniform manner.

\begin{lemma}
\label{lem:absorb}
For all integers $k\ge2$ and all real $A\ge10$,
\[
  2\exp_3\bigl((20\,k!\log(kA))^2\bigr)\;\le\;
  \max\{\exp_3(A^2),\,\exp_5(10^{10}k^6)\}.
\]
\end{lemma}
\begin{proof}
  Suppose first that $A>80\,k!\log A$.  Then $A>160\log A$, so
  $A>10^3$; moreover $A>80k$, whence $\log(kA)<2\log A$ and
  \[
    20\,k!\log(kA)<40\,k!\log A<A/2.
  \]
  Consequently the left-hand side is at most
  $2\exp_3(A^2/4)\le\exp_3(A^2)$, using $A^2/4+1\le A^2$ and
  $2\exp_3(t)\le\exp_3(t+1)$ for $t\ge1$.

  Suppose instead that $A\le80\,k!\log A=:B_1\log A$.  Since
  $t\mapsto t/\log t$ is increasing for $t\ge3$ and
  $2B_1\log B_1/\log(2B_1\log B_1)\ge B_1$, it follows that
  $A\le2B_1\log B_1$, and hence
  $\log(kA)\le\log\bigl(160\,k\,k!\log(80\,k!)\bigr)\le3k\log k+8$.
  Using $\log(20\,k!)\le3+k\log k$, we obtain
  \[
    (20\,k!\log(kA))^2
    \le\exp\bigl(6+2k\log k+2\log(3k\log k+8)\bigr)
    \le\exp(4k^2)
    \qquad(k\ge2).
  \]
  Consequently the left-hand side is at most
  $2\exp_4(4k^2)<\exp_5(4k^2)\le\exp_5(10^{10}k^6)$.
\end{proof}

\subsection{Linear equations in elements of multiplicative groups}
\label{sec:lineareq}
We now present two results concerning solutions of linear equations
with variables in multiplicative groups.  These results will play a
key role in our construction of a Universal Skolem Set.  Throughout
this section $\mathbb{K}$ denotes a number field.

The first result  is due to Amoroso and
Viada~\cite[Theorem 6.1]{AV09}.  Here, given $n\geq 1$ and elements
$\boldsymbol{x}=(x_1,\ldots,x_n)$ and
$\boldsymbol{y}=(y_1,\ldots,y_n)$ of $(\mathbb{K}^\times)^n$, we
write $\boldsymbol{x} \ast \boldsymbol{y} := (x_1y_1,\ldots,x_ny_n)$.

\begin{theorem}
  Let $\Gamma$ be a finitely generated subgroup of
  $(\mathbb{K}^\times)^n$
of  rank $r$ and let 
  $\varepsilon:=(8n)^{-6n^3}$.  Then the set
  \[ \left\{ \boldsymbol{x} \ast \boldsymbol{y}  : 
      \boldsymbol{x}\in\Gamma,\,  \boldsymbol{y} \in
      \mathbb{K}^n,\,
      \boldsymbol{x}^\top \boldsymbol{y}=1, \text{ and }
    h(\boldsymbol{y}) \leq \varepsilon(1+ h(\boldsymbol{x})) \right\} \]    
is contained in a union of at most $(8n)^{6n^2(n+r)}$ proper linear
subspaces of $\mathbb{K}^n$.
\label{thm:AV09}
\end{theorem}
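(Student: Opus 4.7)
The plan is to reduce the statement to a quantitative subspace theorem for unit equations. Writing $\boldsymbol{z} = \boldsymbol{x}\ast\boldsymbol{y}$, the hypothesis $\boldsymbol{x}^\top\boldsymbol{y}=1$ becomes the linear equation $z_1 + \cdots + z_n = 1$, and we wish to cover the set of $\boldsymbol{z}$ arising from admissible pairs $(\boldsymbol{x}, \boldsymbol{y})$ by a bounded number of proper linear subspaces of $\mathbb{K}^n$. If $\boldsymbol{y}$ were also confined to a finitely generated multiplicative subgroup of low rank, this would be a direct application of the Evertse--Schlickewei--Schmidt theorem on non-degenerate solutions of $S$-unit equations. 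The content of Theorem~\ref{thm:AV09} is that the same shape of bound persists when $\boldsymbol{y}$ is essentially free, provided its height is tiny compared with that of $\boldsymbol{x}$.

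The main idea is a place-by-place dichotomy. For each absolute value $v$ of $\mathbb{K}$ and each admissible solution, one partitions $\{1,\dots,n\}$ into indices $i$ for which $|x_i y_i|_v$ is ``large'' and those for which it is ``small''. Because $h(\boldsymbol{y}) \leq \varepsilon(1+h(\boldsymbol{x}))$ with $\varepsilon = (8n)^{-6n^3}$, the $v$-adic contribution of $\boldsymbol{y}$ is negligible compared with that of $\boldsymbol{x}$ on all but a controlled number of indices. Thus the ``large'' part of the unit equation is governed by the $\boldsymbol{x}$-variables in $\Gamma$, and one may invoke the quantitative Subspace Theorem (in the refined form of Schlickewei, Evertse, or Evertse--Ferretti) to cover the corresponding projection of solutions by finitely many proper subspaces. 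An Amoroso--Zannier / Amoroso--David lower bound on the height of points lying on translated proper subtori of $\mathbb{G}_m^n$ would then be used to rule out the degenerate configurations, permitting an inductive reduction to smaller $n$ when the solutions concentrate on a lower-dimensional subvariety.

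The main obstacle is the explicit exponent $6n^2(n+r)$. Qualitative finiteness follows routinely from the subspace machinery once the above dichotomy is set up, but pinning down the constant requires a careful accounting of three competing quantities: the combinatorial count of index partitions (polynomial in $n$ but summed over many places), the quantitative Schmidt bound itself (which contributes a factor exponential in $n^2$ and linear in the rank), and the threshold $\varepsilon$ needed for the $\boldsymbol{y}$-perturbation of the ``large'' part to be harmless. The prescribed value $\varepsilon = (8n)^{-6n^3}$ is exactly what balances these constraints, and propagating the parameters consistently through the induction on $n$ via the subtorus argument is where I would expect the genuine technical difficulty to lie.
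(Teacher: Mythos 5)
This theorem is not proved in the paper; it is stated as a black-box background result cited from Amoroso and Viada~\cite[Theorem 6.1]{AV09}, so there is no in-text argument against which to compare your proposal. Your outline nonetheless names a plausible toolkit---reduction to the affine unit-type equation $z_1+\cdots+z_n=1$, a place-by-place dichotomy exploiting the smallness of $h(\boldsymbol y)$ relative to $h(\boldsymbol x)$, the quantitative Subspace Theorem, effective height lower bounds for points on proper subtori, and induction on the dimension---and this is broadly the line of argument taken in~\cite{AV09}, itself descending from the Evertse--Schlickewei--Schmidt treatment of linear equations over multiplicative groups~\cite{ESS}.

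What you have written, however, is a prospectus rather than a proof. You explicitly defer the derivation of the numerical exponents, namely $\varepsilon=(8n)^{-6n^3}$ and the subspace count $(8n)^{6n^2(n+r)}$, and you state outright that quantifying them is ``where the genuine technical difficulty would lie.'' Those constants are not incidental here: they propagate directly into the bound $(8k)^{3k^5}$ in Step~5 of the proof of Theorem~\ref{thm:1}, so leaving them unaddressed means the proposal cannot serve the role the theorem plays in Section~\ref{sec:2}. To turn the sketch into a proof you would need to run the quantitative Subspace Theorem with explicit dependence on the rank $r$ and on the set of places, couple it with the specific effective toric height bound that Amoroso and Viada establish as the main result of~\cite{AV09} (not merely an older Amoroso--David or Amoroso--Zannier style bound, whose weaker form would leave stray logarithmic factors in the constants), and verify that the recursion over subtori closes with exactly the stated exponents. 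Absent that accounting, the proposal identifies the right circle of ideas but does not prove the statement.
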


The second result, due to Schlickewei and 
Schmidt~\cite{SchlickeweiS00}, concerns equations of the form
\begin{gather}
  \sum_{i=1}^s P_i(\boldsymbol{x}) \boldsymbol{\alpha}_i^{\boldsymbol x}=0
\label{eq:exp-poly}
\end{gather}
in variables $\boldsymbol{x}=(x_1,\ldots,x_n) \in \mathbb{Z}^n$, where
$P_1,\ldots,P_s \in \mathbb{K}[\boldsymbol{x}]$, and
$\boldsymbol{\alpha}_i^{\boldsymbol x} = \alpha_{i1}^{x_1} \cdots
\alpha_{in}^{x_n}$ with $\alpha_{ij} \in \mathbb{K}^\ast$ for all
$i,j$.  We say that a given solution to Equation~\eqref{eq:exp-poly} is
\emph{non-degenerate} if no proper sub-sum
vanishes.  The following 
upper bound on the number of non-degenerate solutions appears
as~\cite[Theorem 1]{SchlickeweiS00}:
\begin{theorem}
  Let $\delta_i \in \mathbb N$ be the degree of polynomial $P_i$ for
  $i\in \{1,\ldots,s\}$.  Put $A_0=\sum_{i=1}^s \binom{n+\delta_i}{n}$
  and $B_0=\max\{n,A_0\}$.  Suppose that there is no non-zero
  $\boldsymbol{x}\in\mathbb{Z}^n$ such that
  $\boldsymbol{\alpha}_i^{\boldsymbol{x}} =
  \boldsymbol{\alpha}_j^{\boldsymbol{x}}$ for all
  $i,j \in \{1,\ldots,s\}$.  Then Equation~\eqref{eq:exp-poly} has at
  most $2^{35B_0^3}d^{6B_0^2}$ non-degenerate solutions, where $d$ is the
  degree of the number field $\mathbb{K}$.
\label{thm:SchSch}
  \end{theorem}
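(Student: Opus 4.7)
The plan is to reduce the polynomial-exponential equation \eqref{eq:exp-poly} to a pure $S$-unit linear equation in a suitably enlarged setting, and then invoke a quantitative version of the Evertse-Schlickewei-Schmidt theorem on linear equations in $S$-units. The shape of the bound $2^{35B^3}d^{6B^2}$ should then emerge from carefully tracking constants through the quantitative subspace theorem in dimension $B$ over a degree-$d$ number field.

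First I would expand each polynomial $P_i(\boldsymbol{x})$ into its monomials, rewriting \eqref{eq:exp-poly} as a linear relation
\[ \sum_{k=1}^{A} c_k\, \boldsymbol{x}^{\boldsymbol{e}_k}\, \boldsymbol{\alpha}_{i(k)}^{\boldsymbol{x}} \;=\; 0 \]
with exactly $A=\sum_{i=1}^s \binom{n+\delta_i}{n}$ atomic terms. This is the natural linearisation of the problem: the atoms live in the finitely generated multiplicative group generated by the $\alpha_{ij}$'s, multiplied by polynomial ``slowly-varying'' prefactors $\boldsymbol{x}^{\boldsymbol{e}_k}$.

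Next I would deal with the polynomial factors, which is the core technical issue. Since $\boldsymbol{x}^{\boldsymbol{e}_k}$ grows only polynomially in the height of $\boldsymbol{x}$ whereas $\boldsymbol{\alpha}_{i(k)}^{\boldsymbol{x}}$ grows exponentially, the monomial prefactors contribute only lower-order terms in the height estimates fed into the subspace theorem. Following the strategy of Schlickewei-Schmidt, I would form the vector $\boldsymbol{u}(\boldsymbol{x}) := (c_k\boldsymbol{x}^{\boldsymbol{e}_k}\boldsymbol{\alpha}_{i(k)}^{\boldsymbol{x}})_{k=1}^A$ and view each solution of \eqref{eq:exp-poly} as a point on the hyperplane $\sum_k u_k = 0$ in $\mathbb{K}^A$. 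A weighted variant of the quantitative subspace theorem then shows that such points lie in an explicitly bounded number of proper linear subspaces of $\mathbb{K}^A$, with dependence of the form $2^{O(B^3)}d^{O(B^2)}$, where the dimension $B = \max\{n,A\}$ absorbs both the number of atoms and the dimension of the variable lattice.

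Finally I would use the non-degeneracy hypothesis on the $\boldsymbol{\alpha}_i$'s to propagate non-degeneracy from the original equation to its expanded monomial form, and then induct on the number of atoms to pass from subspace containment to an actual count of solutions. At each inductive step, a non-degenerate solution lying in one of the proper subspaces satisfies a shorter linear relation of the same shape, to which the hypothesis still applies thanks to the assumption that no $\boldsymbol{x} \neq 0$ satisfies $\boldsymbol{\alpha}_i^{\boldsymbol{x}} = \boldsymbol{\alpha}_j^{\boldsymbol{x}}$ for all $i,j$.

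The hard part will be the reduction step. The polynomial prefactors $\boldsymbol{x}^{\boldsymbol{e}_k}$ are not $S$-units, so one cannot literally apply the $S$-unit equation theorem to the expanded relation; one must instead feed the polynomial-exponential data into the subspace theorem in a way that (i) does not inflate the effective dimension beyond $B$, and (ii) produces constants depending only on $B$ and $d$, independent of the heights of the $c_k$ and of the specific $\boldsymbol{e}_k$. A secondary difficulty is the non-degeneracy transfer: the expanded relation has many more summands than the original, so apparent ``proper vanishing subsums'' can arise from purely polynomial identities rather than from genuine multiplicative relations among the $\boldsymbol{\alpha}_i$'s, and these spurious degenerations must be carefully ruled out before the inductive bookkeeping closes.
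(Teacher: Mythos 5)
The paper does not prove Theorem~\ref{thm:SchSch}: it is quoted verbatim as Theorem~1 of Schlickewei and Schmidt~\cite{SchlickeweiS00} and invoked as a black box in Step~4 of the proof of Theorem~\ref{thm:1}. There is therefore no internal argument in the paper to compare your attempt against.

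Taken on its own, your sketch does capture the high-level shape of the Schlickewei--Schmidt proof: expand each $P_i$ into its $\binom{n+\delta_i}{n}$ monomials to obtain a linear relation among $A$ atomic terms of the form $c_k\boldsymbol{x}^{\boldsymbol{e}_k}\boldsymbol{\alpha}_{i(k)}^{\boldsymbol{x}}$, feed this into a quantitative subspace-theorem framework, use the multiplicative-independence hypothesis to propagate non-degeneracy, and induct on proper vanishing subsums. However, you then explicitly defer both of the steps that actually carry the argument --- absorbing the polynomial prefactors $\boldsymbol{x}^{\boldsymbol{e}_k}$ (which are not $S$-units) into a subspace-theorem application without inflating the effective dimension, and ruling out the spurious vanishing subsums created by the expansion. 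These are not side issues; they are the content of the theorem. As it stands the sketch cannot produce the specific exponents in $2^{35B^3}d^{6B^2}$, and the description of $B=\max\{n,A\}$ as simply ``absorbing'' both parameters glosses over the real mechanism: in Schlickewei--Schmidt the bound arises from a dichotomy between solutions of small and large norm, with $n$ controlling a lattice-point count in the small range and $A$ controlling the dimension of the relevant projective space in the large range, together with an explicit version of the absolute subspace theorem carrying the $d^{O(B^2)}$ degree dependence. A complete proof would need all of that machinery reconstructed; none of it is present here, so the proposal is an outline, not a proof.
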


    \subsection{Distribution of primes}
    Define $f=f_1f_2$ for linear forms $f_1(t):=a_1t+b_1$ and $f_2(t)=a_2t+b_2$ for
    integers $a_1,a_2,b_1,b_2$, with $a_1,a_2>0$.
    
We are interested in the frequency of $n \in \mathbb N$ such that $f_1(n)$ and $f_2(n)$
are both prime.  An obstruction to the existence of infinitely many
such $n$ is when
$f(n) \equiv 0 \bmod p$ (for some fixed prime $p$) for all $n\in \mathbb N$.  In this case 
for $f_1$ and $f_2$ to be simultaneously prime, one (or both) of them must equal
$p$, which can happen at most twice.  If $f$ does not vanish identically modulo
$p$ for any prime $p$ then we say that it is \emph{admissible}.
For a prime $p$, let $\omega_f(p)$ denote the number of
$x\in\mathbb{F}_p$ such that $f(x)=0$.

In Section~\ref{sec:count} the coefficients of the two linear forms
vary with $X$, so the usual pointwise Bateman--Horn conjecture is not
sufficient for our application.  We therefore present Martin's
uniform formulation of Bateman--Horn: specifically we
state the degree-two, two-factor case
of Hypothesis UH~\cite[Hypothesis UH, Equation (1.6)]{Martin02}.
For distinct linear forms $f_1,f_2$, put $f:=f_1f_2$,
\[
\pi(f;T):=\#\{1\leq n\leq T: |f_1(n)|\text{ and }|f_2(n)|
\text{ are prime}\}
\]
and
\[
\operatorname{li}(f;T):=
\int_{\substack{0<t<T\\ \min\{|f_1(t)|,|f_2(t)|\}\geq2}}
\frac{dt}{\log|f_1(t)|\log|f_2(t)|} \, .
\]

\begin{conjecture}[Uniform Bateman--Horn (Martin's Hypothesis UH)]
\label{conj:bateman-horn}
  Let $B$ be a sufficiently large fixed positive constant.  Uniformly
for pairs of distinct linear forms $f_1,f_2\in\mathbb Z[t]$ for which
$f=f_1f_2$ is admissible and every coefficient of $f$ has absolute
value at most $T^B$, one has
\begin{gather}
\pi(f;T)=C_f\operatorname{li}(f;T)
+O_B\!\left(\frac{C_fT}{(\log T)^3}+1\right),
\qquad
C_f:=\prod_{p\text{ prime}}\frac{p(p-\omega_f(p))}{(p-1)^2}.
\label{eq:bateman-horn}
\end{gather}
\end{conjecture}
For an admissible product of two distinct linear forms one has
$a_1b_2\ne a_2b_1$ (for a proportional pair, $f$ would have a fixed
prime divisor), so $\omega_f(p)=2$ for every prime
$p\nmid a_1a_2(a_1b_2-a_2b_1)$.  The corresponding local factor is
$1+O(p^{-2})$, and the Euler product defining $C_f$ therefore
converges unconditionally.
For fixed $f$, one has
$\operatorname{li}(f;T)\sim T/(\log T)^2$, so this hypothesis implies
the usual Bateman--Horn asymptotic for a fixed pair of linear forms.
The general Bateman--Horn conjecture concerns simultaneous prime values
of an arbitrary finite family of polynomials; the formulation above is
the uniform degree-two, two-factor case that we need.

Even the fixed-polynomial Bateman--Horn conjecture is known only for a
single linear polynomial, where it is the Prime Number Theorem for
arithmetic progressions.  In particular, the Hardy--Littlewood twin
prime conjecture remains open.

For the unconditional argument we use the standard two-dimensional
Brun--Selberg upper-bound sieve.  In the form needed below, for every
fixed $B>0$, uniformly for pairs of distinct admissible linear forms
$f_1,f_2$ whose coefficients have absolute value at most
$(\log T)^B$, and uniformly for intervals $I\subseteq[0,T]$ with
$|I|\asymp T$, the sieve gives
\begin{gather}
 \#\{t\in I:f_1(t),f_2(t)\text{ are prime}\}
 \leq (\kappa+o_B(1))C_f\frac{|I|}{(\log T)^2},
 \qquad \kappa:=8.
 \label{eq:BRUN}
\end{gather}
This is a special case of a theorem of Halberstam and
Richert~\cite[Theorem~5.7]{HR}: for distinct linear forms
$f_i(t)=a_it+b_i$, $1\le i\le g$, with
$E:=\prod_{i}a_i\prod_{r<s}(a_rb_s-a_sb_r)\ne0$, the number of $t\le x$
such that $f_1(t),\ldots,f_g(t)$ are all prime is at most
\[
  2^g\,g!\,\prod_p\frac{1-\omega_f(p)/p}{(1-1/p)^{g}}
  \cdot\frac{x}{(\log x)^{g}}
  \left(1+O_g\!\left(\frac{\log\log 3x+\log\log 3|E|}{\log x}\right)\right).
\]
For coefficients of absolute value at most $(\log T)^B$ one has
$\log\log|E|\ll_B\log_3T=o(\log T)$, which gives~\eqref{eq:BRUN} for
$I=[0,T]$ with $\kappa=2^2\cdot2!=8$; the case of a general interval
$I\subseteq[0,T]$ with $|I|\asymp T$ follows upon replacing $t$ by
$t+\lfloor\inf I\rfloor$, which changes neither $E$ nor the singular
series $C_f$.

Under the same coefficient and interval hypotheses, suppose in
addition that
\[
 E:=|a_1a_2(a_1b_2-a_2b_1)|\ne0
 \qquad\text{and}\qquad
 \gcd(a_1,a_2)\leq2
\]
(for two forms we take $E$ in absolute value).
Then
\begin{gather}
 \#\{t\in I:f_1(t),f_2(t)\text{ are prime}\}
 \ll_B\frac{E}{\varphi(E)}
       \frac{|I|}{(\log T)^2}.
 \label{eq:sieve}
\end{gather}
Indeed, if the pair is admissible, then for every odd prime
$p\mid E$ the hypothesis on the leading coefficients ensures that
at least one of the forms has a root modulo $p$.  Thus
$\omega_f(p)\geq1$, and the local factor at $p$ is at most
$p/(p-1)$.  At primes $p\nmid2E$ the local factor is at most
$1$, while the factor at $2$ is bounded absolutely.  Consequently
$C_f\ll E/\varphi(E)$, and~\eqref{eq:sieve} follows
from~\eqref{eq:BRUN}.  A non-admissible pair has only $O(1)$
simultaneous prime values, which is absorbed in~\eqref{eq:sieve} for
sufficiently large $T$.

\section{A Universal Skolem Set}
\label{sec:2}
 In this section we define a subset $\mathcal S$ of
  integers and show in Theorem~\ref{thm:1} that it is a Universal
  Skolem Set, by giving an explicit upper bound on
  $\{ n \in \mathcal S : u_n=0\}$ for a given LRS $\boldsymbol u$.  The
  main technical ingredient of the proof of Theorem~\ref{thm:1} are
  the results in Section~\ref{sec:lineareq} that provide upper bounds on
  the number of solutions of exponential-polynomial Diophantine
  equations.

Fix a positive integer parameter $X$.  We define disjoint intervals
\begin{gather}
A(X):=\left[\log_2 X, \sqrt{\log X}\right]
\quad\mbox{and}\quad 
B(X):=\left[\frac{\log X}{\sqrt{\log_3 X}}, \frac{2\log X}{\sqrt{\log_3 X}}\right].
\label{eq:defAB}
\end{gather}
We further define a \emph{representation} of an integer $n\in [X,2X]$
to be a triple $(q,P,a)$ such that $q \in A(X)$, $a \in B(X)$, $P$ and
$q$ are prime, and $n=Pq+a$.
We say that two distinct representations $n=Pq+a$ and
$n=P'q'+a'$ are \emph{correlated} if
\begin{gather*}
  |(a+\eta q)-(a'+\eta q')|<\sqrt{\log X}
\end{gather*}
for some $\eta\in\{\pm1\}$.

We denote by $r(n)$ the number of representations of $n$.  Finally we
put
\begin{align*}
  \mathcal{S}(X) := \left\{ n \in [X, 2X] :
    \begin{aligned}
      & \, r(n) > \log_4 X \text{ and no two}\\
      & \, \text{representations of $n$ are correlated}
    \end{aligned}
  \right\}
\end{align*}
and we define
\begin{gather*}
 {\mathcal S}:=\bigcup_{\substack{X\in\mathbb N\\X\geq2^{10}}}
 \mathcal S(X) \, .
\end{gather*}

Membership in $\mathcal S$ is effective.
The following result shows that $\mathcal S$ is a Universal Skolem Set.
\begin{theorem}
\label{thm:1}
Let ${\boldsymbol u}=\langle u_n\rangle_{n=0}^\infty$ be a non-degenerate LRS of order $k\ge 2$ given by
\begin{gather*}
u_{n+k}=a_0u_{n+k-1}+\cdots+a_{k-1}u_n
\end{gather*}
for $n\ge0$, with given initial terms $u_0,\ldots,u_{k-1}$ not all
zero and $a_{k-1}\ne 0$. 
If $u_n=0$ and $n\in {\mathcal S}$, then
\begin{gather*}
n\leq\max\{\exp_3(A^2),\exp_5(10^{10}k^6)\},\qquad
A:=\max\{10,|u_i|,|a_i|:0\leq i\leq k-1\}.
\end{gather*}
\end{theorem}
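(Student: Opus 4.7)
The plan is to translate $u_n = 0$, combined with the fact that $n$ admits $r(n) > \log_4 X$ representations, into a polynomial-exponential \emph{companion equation} whose solutions are indexed by these representations, and then to apply the bounds of Schlickewei--Schmidt (Theorem~\ref{thm:SchSch}) and Amoroso--Viada (Theorem~\ref{thm:AV09}) to force a contradiction unless $X$ (and hence $n$) is bounded.

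First, expand $u_n$ in Binet form as $u_n = \sum_{i=1}^m P_i(n)\,\alpha_i^n$, where $\alpha_1,\ldots,\alpha_m$ are the distinct characteristic roots of $\mathbf{u}$ (no ratio a root of unity, by non-degeneracy) and the $P_i \in \mathbb{K}[x]$ are nonzero polynomials lying in a Galois number field $\mathbb{K}$ of degree $d \le k!$. The houses of the $\alpha_i$ and of the coefficients of the $P_i$ are polynomially bounded in $A$, so their Weil heights are $O(\log A)$. Fix a dyadic scale $X = 2^j$ with $j \ge 10$ such that $n \in \mathcal{S}(X) \subseteq [X, 2X]$. For each representation $(q, P, a)$ of $n$, the identity $u_n = 0$ rewrites as
\[
\sum_{i=1}^m P_i(n)\,\alpha_i^{a}\,(\alpha_i^{q})^{P} = 0,
\]
a polynomial-exponential equation in the exponent variables $(P, q, a)$; this is the companion equation. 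Each of the $r(n) > \log_4 X$ representations supplies one solution.

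Next, apply Theorem~\ref{thm:SchSch} to bound the number of \emph{non-degenerate} solutions of the companion equation by an explicit constant $C_1(k, d)$ that is triply exponential in $k$. Once $\log_4 X > C_1(k, d)$, most representations must correspond to degenerate solutions, meaning that some proper sub-sum vanishes; by pigeonhole over the at most $2^m$ proper sub-sums, a fixed index set $I \subsetneq \{1,\ldots,m\}$ arises for which many representations $(q_j, P_j, a_j)$ satisfy
\[
\sum_{i \in I} P_i(n)\,\alpha_i^{a_j}\,(\alpha_i^{q_j})^{P_j} = 0.
\]
Eliminating the fixed coefficients $P_i(n)$ between two such sub-sum equations produces points in a hyperplane of a finitely generated multiplicative subgroup $\Gamma$ of $\mathbb{K}^{|I|}$; at this stage Theorem~\ref{thm:AV09} applies, bounding (up to proper linear subspaces) the number of such points in terms of the rank of $\Gamma$ and $|I| \le k$. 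Iterating this dimensional reduction yields a nontrivial multiplicative relation among the $\alpha_i/\alpha_j$, contradicting non-degeneracy, unless $\log_4 X$ is bounded in terms of $k$ and $d$. The non-correlation condition $|(a + \eta q) - (a' + \eta q')| \ge \sqrt{\log X}$ is exactly what guarantees that distinct representations produce distinct, non-collinear points in $\Gamma$; without it the dimensional reduction would be vacuous.

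Finally, the bound on $\log_4 X$ is translated into the announced bound $n < \max\{\exp_3(A^2),\, \exp_5(10^{10}k^6)\}$ by combining the height estimates of Theorem~\ref{thm:height_bounds} with Proposition~\ref{prop:elementary}. The $\exp_3(A^2)$ branch reflects the $O(\log A)$ heights of the algebraic data (and dominates when $A$ is large compared to $k$); the $\exp_5(10^{10}k^6)$ branch reflects the triply-to-quintuply iterated exponential constants produced by Theorems~\ref{thm:SchSch} and~\ref{thm:AV09} when the parameters are expressed in terms of $k$ (recalling $d \le k!$, $m \le k$). The chief obstacle is the treatment of the polynomial coefficients $P_i(n)$ in the non-simple case: unlike in~\cite{LucaOW22}, one cannot apply Theorem~\ref{thm:SchSch} ``off the shelf'' with constant coefficients, since the polynomials $P_i(n)$ contribute to the total-degree parameter $B$, and the analysis of degenerate sub-sums must be carried out in the enriched polynomial-exponential setting. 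This adaptation of the techniques of~\cite{AV09, ESS, SchlickeweiS00} is the technical heart of Section~\ref{sec:2}.
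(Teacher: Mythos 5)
There is a genuine gap, and it lies at the very heart of the construction. Your companion equation
\[
\sum_{i=1}^m P_i(n)\,\alpha_i^{a}\,(\alpha_i^{q})^{P}=0
\]
is merely a re-parenthesisation of $u_n=0$: since $\alpha_i^{a}(\alpha_i^{q})^{P}=\alpha_i^{qP+a}=\alpha_i^{n}$ and $P_i(n)$ is a fixed number, every representation $(q,P,a)$ supplies the \emph{same} identity, not a genuinely new Diophantine solution. There is no exponential equation here whose solution count you can compare against $r(n)$. Moreover the exponent $a+qP$ is bilinear in the three variables $(a,q,P)$, so this expression is not even of the linear-exponent form $\boldsymbol{\alpha}_i^{\boldsymbol{x}}=\alpha_{i1}^{x_1}\cdots\alpha_{in}^{x_n}$ that Theorem~\ref{thm:SchSch} requires, and the heights of $\alpha_i^{qP}$ grow linearly in $P\sim X/q$, ruining any hope of a height-based descent via Theorem~\ref{thm:AV09}.

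The missing idea is \emph{reduction modulo $P$ via the Frobenius automorphism}. Choosing a prime ideal $\mathfrak{p}\mid P$ in $\mathcal{O}_{\mathbb{K}}$ with Frobenius $\sigma$, one replaces $\alpha_i^{P}$ by $\sigma(\alpha_i)\bmod\mathfrak{p}$ and $Q_i(n)$ by $Q_i(a)\bmod\mathfrak{p}$ (since $n\equiv a\pmod P$), obtaining the congruence $\sum_i Q_i(a)\,\alpha_i^{a}\,\sigma(\alpha_i)^{q}\equiv 0\pmod{\mathfrak{p}}$. This is where the dichotomy that drives the proof enters: if this algebraic integer is nonzero, its norm is a nonzero rational integer of size roughly $(kA)^{O(a\,k!)}$ divisible by $P>\sqrt{X}$, which bounds $X$ outright; if it is zero for all representations, then pigeonholing on the (at most $k!$) possible Frobenius classes gives at least $(\log_4 X)/k!$ distinct pairs $(q,a)$ solving the \emph{genuine} two-variable companion equation $\sum_i Q_i(a)\,\alpha_i^{a}\,\sigma(\alpha_i)^{q}=0$, with $\sigma$ a fixed automorphism and both $a$-arguments of $Q_i$ and the exponents being the unknowns. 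Only then can Theorems~\ref{thm:SchSch} and~\ref{thm:AV09} be brought to bear (the former when the relation group $\mathcal{P}$ between the base vectors $(\sigma(\alpha_i),\alpha_i)$ is trivial, the latter via a rank-increasing descent when it is not). Your sketch of the final descent and the role of the non-correlation hypothesis is in the right spirit but, without the Frobenius step, there is no two-variable equation to descend on, so the argument cannot get started.
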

The rest of this section is devoted to the proof of
Theorem~\ref{thm:1}, which is divided into 5 steps.
Throughout, we abbreviate the bound of
Theorem~\ref{thm:1} by
\[
  \mathcal T:=\max\{\exp_3(A^2),\,\exp_5(10^{10}k^6)\},
\]
a quantity depending only on $k$ and $A$.
  Our goal is
to show that for all $X$, if $u_n=0$ for some
$n \in \mathcal{S}(X)$ then
\begin{gather}
  2X \leq \mathcal T \,  .
  \label{eq:desired}
\end{gather}
The theorem follows since $n \leq 2X$.

\textbf{Step 1: Rescaling.}
\label{sec:rescaling}
We rescale $\boldsymbol{u}$ so that all the coefficients of the
polynomials in its closed form representation are algebraic integers.
To this end, let 
\begin{gather*}
\Psi(x):=x^k-a_0x^{k-1}-\cdots-a_{k-1}
\end{gather*}
be the characteristic polynomial of ${\boldsymbol u}$ and suppose that
$\Psi$ has roots $\alpha_1,\ldots,\alpha_s$, where $\alpha_i$ has
multiplicity $\nu_i$. Note that all characteristic roots are
algebraic integers since the characteristic polynomial is monic and
has coefficients in $\mathbb{Z}$. The field
${\mathbb K}:={\mathbb Q}(\alpha_1,\ldots,\alpha_s)$, known as the splitting field of $\Psi$,
obtained by adjoining the roots of $\Psi$ to $\mathbb Q$, has
degree at most $k!$.  If 
$|\alpha_i|>1$, then rearranging the equation $\Psi(\alpha_i)=0$ yields
\begin{gather*}
|\alpha_i|=\left|a_0+\frac{a_1}{\alpha_i}+\cdots+\frac{a_{k-1}}{\alpha_i^{k-1}}\right|<kA\, ,
\end{gather*}
for $A$ as in the statement of Theorem~\ref{thm:1}.
Writing $\rho:=\max_{1\leq i \leq s}|\alpha_i|$, we thus have $\rho<kA$.

The sequence $\boldsymbol{u}$ admits a closed-form solution
$ u_n=\sum_{i=1}^s Q_i(n)\alpha_i^n \, , $ where polynomial
$Q_i$ has degree at most $\nu_i -1$ for $i \in \{1,\ldots,s\}$.
The coefficients of the polynomial $Q_i(x)$ belong to $\mathbb{K}$ and 
can be computed from the initial values
$u_0,\ldots,u_{k-1}$ by solving a system of linear equations, thanks to 
the closed-form equation for $u_n$.
By Cramer's
rule, each of the coefficients of $Q_i(x)$ is the quotient of an 
algebraic integer by the determinant\footnote{The underlying matrix
  has $s$ blocks, one for each characteristic root. For $\ell \in \{1,\ldots,s\}$ the
  $\ell$-th block has dimension $k\times \nu_{\ell}$ and has $(i,j)$-th
  element $(i-1)^{j-1} \alpha_{\ell}^{i-1}$
  for $i \in \{1,\ldots,k\}$
and  $j \in \{1,\ldots,\nu_\ell\}$.}
\[
\Delta :=\begin{vmatrix} 1 & \ldots & 0 & 1 & \ldots & 0 & 1 & \cdots\\
\alpha_1 & \ldots & \alpha_1 & \alpha_2 & \ldots & \alpha_{s-1} & \alpha_s & \ldots\\
\vdots & \ddots & \vdots & \vdots & \ddots & \vdots & \vdots & \ddots\\
\alpha_1^{k-1} & \ldots & (k-1)^{\nu_1-1} \alpha_1^{k-1} &
\alpha_2^{k-1} & \ldots & (k-1)^{\nu_{s-1}-1} \alpha_{s-1}^{k-1} &
\alpha_s^{k-1} &\ldots \end{vmatrix} \, .
\]
The length of each column vector above is at most 
\begin{gather*}
{\sqrt{k(k-1)^{2(k-1)} \rho^{2k}}}<k^k (kA)^k=k^{2k} A^k.
\end{gather*}
Thus, by the Hadamard inequality,
$|\Delta|^2<(k^{2k^2} A^{k^2})^2=(k^2 A)^{2k^2}$.
Note also that $\Delta\ne0$: since the characteristic
roots are non-zero, elementary column operations (passing from the
monomials $x^j$ to falling factorials within each block, and scaling)
transform the underlying matrix into a confluent Vandermonde matrix at
the distinct nodes $\alpha_1,\ldots,\alpha_s$, whose determinant is
non-zero.

A Galois automorphism $\sigma \in
\mathrm{Gal}(\mathbb{K}/\mathbb{Q})$ permutes the characteristic
roots, and thus when applied to the above determinant, will
permute its columns. As a result, for any such $\sigma$,
$\sigma(\Delta) = \pm \Delta$, and therefore the quantity $\Delta^2$
is stable under Galois automorphisms. We conclude that $\Delta^2$
must be a rational number, and since it is also by construction an
algebraic integer,\footnote{Note that every entry of the determinant is an
  algebraic integer.} we must have $\Delta^2 \in \mathbb{Z}$.

Solving with Cramer's rule for the coefficients of $Q_i(x)$ gives, via
the Hadamard inequality again,
that the coefficients are bounded by $kA |\Delta|$.  Thus, replacing
${\boldsymbol u}$ by $\Delta^2 {\boldsymbol u}$, we have that
\begin{equation}
\label{eq:bound:coef}
Q_i(x):=\sum_{j=0}^{\nu_i-1} c_{i,j}x^j ,\quad{\text{\rm where }}
|c_{i,j}|\le (k^2A)^{2k^2+1}\quad {\text{\rm and}}\quad c_{i,j}\in
{\mathcal O}_{\mathbb K} \,  .
\end{equation}
From Inequality~\eqref{eq:bound:coef} and Item~1 of
Proposition~\ref{thm:height_bounds} we have the height bound
\begin{equation}
  h(c_{i,j}) \leq (2k^2+1) \log (k^2 A)  \quad\text{for all } i\in
  \{1,\ldots,s\}
  \text{ and }j   \in \{0,\ldots,\nu_i-1\}\,  .
\label{eq:height_coef}
\end{equation}

\textbf{Step 2: Reduction modulo~$P$.}  Fix $n \in \mathcal{S}(X)$
such that $u_n=0$ and consider a representation $n=qP+a$.  Let
$\mathfrak{p}$ be a prime ideal factor of $P$ in
${\mathcal O}_{\mathbb K}$, and recall that there is an automorphism
$\sigma\in {\rm Gal}({\mathbb K}/{\mathbb Q})$ such that
$\sigma(\alpha) \equiv \alpha^P \bmod\mathfrak{p}$ for all
$\alpha \in \mathcal{O}_{\mathbb{K}}$ (see~\cite[Chap.~III.4, Theorem
27]{frohlich1991algebraic}).\footnote{In the case that $\mathbb K=\mathbb Q$, this is just Fermat’s Little Theorem.}
When $\sigma$ is unique with this
property, it is called the \emph{Frobenius automorphism} associated
with $\mathfrak p$.  In case $\sigma$ is not unique (namely, when the
prime $P$ ramifies in $\mathbb K$), we choose $\sigma$ arbitrarily,
and, by a slight abuse of terminology, nevertheless refer to our choice as a
Frobenius automorphism.

From $u_n=0$ and $n=qP+a$ we
have
 \begin{equation}
 \label{eq:1}
 \sum_{i=1}^s Q_i(a) \alpha_i^a \sigma(\alpha_i)^q  \equiv 0 \pmod
 {\mathfrak{p}} \, .
\end{equation}

Recall that our goal is to establish~\eqref{eq:desired}.
If $X \leq  \exp(10k^2\log k)$ then clearly~\eqref{eq:desired} holds.
We proceed now under the assumption
$X> \exp(10k^2\log k)$.  Since $a \in B(X)$ and interval $B(X)$ has
left endpoint $\frac{\log X}{\sqrt{\log_3 X}}$, we have
$a \geq \frac{\log X}{\sqrt{\log_3 X}}>4k^2+3$.  It follows that  $a^k<k^a$.
Noting also that $q \leq a$,
since $q \in A(X)$ and $a \in B(X)$,
the absolute value of
 the left-hand side of~\eqref{eq:1} is at most
\[
 (k^2A)^{2k^2+1}\,k\,a^k\rho^{2a}
 \;<\;(k^2A)^{2k^2+1}\,k^{a+1}(kA)^{2a}
 \;\le\;(kA)^{4k^2+3}(kA)^{3a}
 \;<\;(kA)^{5a},
\]
using $a^k<k^a$ (valid since $a>4k^2+3$), $\rho<kA$, and
$4k^2+3<2a$.

 Write $\beta$ for  the left-hand
 side of~\eqref{eq:1} and suppose that $\beta\neq 0$.  Then $\beta$ is a non-zero algebraic integer of
 degree at most $k!$, all of whose conjugates have absolute value at
 most $(kA)^{5a}$, and which is divisible by $\mathfrak{p}$.
This implies that $P$ divides $N_{\mathbb K/\mathbb Q}(\beta)$, which is an integer of  size at most
 $
 (kA)^{5ak!}.
 $
 Since \[P\ge \frac{X-a}{q}\ge \frac{X-\log X}{{\sqrt{\log X}}}>\sqrt{X}\]
 for $X>X_0:=100$, and $a \leq 2\log X/\sqrt{\log_3 X}$, taking logs
 we have
\begin{gather*}
 5k! \log (kA) \frac{2\log X}{\sqrt{\log_3 X}}>\frac{\log X}{2}.
\end{gather*}
It follows that $ {\sqrt{\log_3 X}}<20k!\log(kA)$, and so
$ X<\exp_3((20k!\log(kA))^2)$.
By Lemma~\ref{lem:absorb}, this yields
$2X\le\mathcal T$, which is Inequality~\eqref{eq:desired}.

\textbf{Step 3: Companion equation.}  In Step 2 we have proved
Inequality~\eqref{eq:desired} under the assumption that the left-hand
side of~\eqref{eq:1} is non-zero for some representation of $n$.  Now
suppose, on the contrary, that the left-hand side of~\eqref{eq:1} is
zero for each of the $r(n)> \log_4 X$ representations of $n$.  Of
these representations, at least $(\log_4 X)/k!$ have the same
Frobenius automorphism $\sigma$.  For this choice of $\sigma$ we have
that the \emph{companion equation} (the equational analog of the
congruence~\eqref{eq:1})
\begin{gather}
  \sum_{i=1}^s Q_i(a) \alpha_i^a\sigma(\alpha_i)^q=0
\label{eq:comp}
\end{gather}has at least $(\log_4 X)/k!$ solutions in integer variables
$q,a$.  The remainder of the proof is dedicated to deriving an upper
bound on the number of solutions of~\eqref{eq:comp} that arise from
representations of $n$. From this we obtain~\eqref{eq:desired}, as
desired.

Every solution of~\eqref{eq:comp} has a non-degenerate vanishing
sub-sum and we focus on bounding the number of such sub-sums.  The
following claim, proven in Section~\ref{sec:vanishing}, considers
sub-sums that involve only terms from a single summand
$Q_i(a)\alpha_i^a\sigma(\alpha_i)^q$ of~\eqref{eq:comp}.
\begin{restatable}{claim}{VANISHING}
  Suppose $R_i(a)\alpha_i^a\sigma(\alpha_i)^q=0$, where $R_i$ is a
  sub-polynomial of $Q_i$ (i.e., every monomial in $R_i$ appears in
  $Q_i$) for some $i\in \{1,\ldots,s\}$.  Then
  $2X\le\mathcal T$.
  \label{claim:vanishing}
\end{restatable}

Since the conclusion of Claim~\ref{claim:vanishing} is
precisely Inequality~\eqref{eq:desired}, it remains to bound the total
number of non-degenerate solutions of each of the at most $2^k$
sub-equations of the form
 \begin{equation}
 \label{eq:4}
 \sum_{i\in I} R_i(a)\sigma(\alpha_i)^q\alpha_i^a=0 \, , 
\end{equation}
of~\eqref{eq:comp}, where $I\subseteq \{1,\ldots,s\}$ contains
at least two elements, and where $R_i(x)$ is a sub-polynomial of
$Q_i(x)$ for all $i\in I$.  For this task, a key structure is the group
$\mathcal P$ of ${\boldsymbol z}=(z_1,z_2)\in {\mathbb Z}^2$ such that
 \begin{gather*}
\sigma(\alpha_i)^{z_1}\alpha_i^{z_2}=\sigma(\alpha_j)^{z_1}\alpha_j^{z_2}\quad{\text{\rm
    for~all}}\quad i,j\in I \, . 
\end{gather*}
For ${\boldsymbol z} =(z_1,z_2) \in \mathcal{P}$ we have
$\sigma(\alpha_i/\alpha_j)^{z_1}=(\alpha_j/\alpha_i)^{z_2}$ for all
$i,j\in I$.  As shown in Section~\ref{sec:background}, since
$\alpha_i/\alpha_j$ is not a root of unity, this entails that $z_1= z_2$ or
$z_1=-z_2$.  There are thus three possibilities for $\mathcal{P}$: either
$\mathcal{P}=\{\boldsymbol{0}\}$, $\mathcal{P}$ is parallel to $(1,1)$,
or $\mathcal{P}$ is parallel to $(1,-1)$.

\textbf{Step 4: The easy case.}
This is the case ${\mathcal P}=\{{\bf 0}\}$.
Recall that in Equation~\eqref{eq:4} polynomial $R_i$ has degree at
most $\nu_i-1$.   Now
 Theorem~\ref{thm:SchSch} shows that if we put
 \begin{gather*}
 A_0:=\sum_{i\in I} \binom{2+\nu_i-1}{2} \quad\text{and}\quad
 B_0=\max\{2,A_0\} \,
 \end{gather*}
then the number of solutions $(a,q)$ of~\eqref{eq:4} is at most $
2^{35B_0^3} (k!)^{6B_0^2}$.  But
 \begin{gather*}
 A_0\le \sum_{i\in I} \binom{\nu_i+1}{2}=\sum_{i\in I}\frac{\nu_i(\nu_i+1)}{2}\le \sum_{i\in I} \nu_i^2\le k^2,
 \end{gather*}
 so the number of solutions of~\eqref{eq:4} is at most
 $
 2^{35k^6} (k!)^{6k^4}.
 $
After multiplying by $2^k$ to account for all non-degenerate
sub-sums, the total number of companion-equation solutions arising
from easy sub-equations is at most
\[
 E_k:=2^k2^{35k^6}(k!)^{6k^4}.
\]
If $(\log_4X)/k!\leq2E_k$, then the same calculation as above gives
\begin{equation}
 \label{eq:max}
 X<\max\{\exp_5(10^{10}),\exp_5(25k^6)\},
\end{equation}
which yields Inequality~\eqref{eq:desired}.  We may therefore suppose
in Step~5 that $(\log_4X)/k!>2E_k$.  In that case the hard
sub-equations account for more than $(\log_4X)/(2k!)$ of the
companion-equation solutions.

\textbf{Step 5: The hard case.}  We are left with the case
${\mathcal P}\neq\{{\bf 0}\}$, where we cannot apply
Theorem~\ref{thm:SchSch}.  In this case $\mathcal P$ is either a
subgroup of $\{(z,z):z\in\mathbb Z\}$ or a subgroup of
$\{(z,-z):z\in\mathbb Z\}$.

Fix one of the hard sub-equations~\eqref{eq:4}.  If
$(m,m)\in\mathcal P$ is non-zero, then, after fixing $i_0\in I$,
\[
 \left(\frac{\sigma(\alpha_i)\alpha_i}
 {\sigma(\alpha_{i_0})\alpha_{i_0}}\right)^m=1
 \qquad(i\in I).
\]
If instead $(m,-m)\in\mathcal P$ is non-zero, the same assertion holds
with $\sigma(\alpha_i)/\alpha_i$ in place of
$\sigma(\alpha_i)\alpha_i$.  Consequently, in either case there are a
non-zero $\lambda\in\mathbb K$, roots of unity
$\zeta_i\in\mathbb K$, and a sign $\eta\in\{\pm1\}$ such that
\[
 \sigma(\alpha_i)=\lambda\zeta_i\alpha_i^\eta
 \qquad(i\in I).
\]
Here $\eta=-1$ in the diagonal case and $\eta=1$ in the
anti-diagonal case.  Let $M:=|\mu(\mathbb K)|$ be the number of roots
of unity in $\mathbb K$.  Since $\mu(\mathbb K)$ is cyclic,
$\varphi(M)\leq[\mathbb K:\mathbb Q]\leq k!$; the elementary bound
$M\leq2\varphi(M)^2$ therefore gives
\begin{gather}
 M\leq2(k!)^2.
 \label{eq:roots-unity-bound}
\end{gather}
After cancelling $\lambda^q$ in~\eqref{eq:4} and restricting $q$ to a
fixed residue class $r\pmod M$, we obtain
\begin{equation}
\label{eq:5}
 \sum_{i\in I}\widetilde R_{i,r}(a)\alpha_i^{a+\eta q}=0,
 \qquad \widetilde R_{i,r}:=\zeta_i^rR_i.
\end{equation}
Multiplication by a root of unity preserves the support, degree,
algebraic integrality, and all height bounds for the coefficients of
$R_i$.  We may therefore apply the argument below separately in each
of the at most $M$ residue classes and, within a fixed class, suppress
the tildes and the subscript $r$.

As in Step 4, we will establish the
bound~\eqref{eq:desired} by
giving an upper bound on the number of solutions of~\eqref{eq:5} and
hence on the number of representations of $n$.  In lieu of
Theorem~\ref{thm:SchSch}, we use a bespoke argument that uses ideas of
\cite{AV09,ESS,SchlickeweiS00}, but which is greatly
simplified by exploiting the assumption that no two representations of
$n$ are correlated.
For either fixed sign $\eta$, the map $(q,a)\mapsto a+\eta q$ is
injective on the representations under consideration, and any two
distinct values in its image differ by at least $\sqrt{\log X}$;
otherwise the corresponding representations would be correlated.

The argument is by induction on the number of summands $|I| \le k$.
To get started, we set $\ell:=|I|$, relabel the roots so that
$I=\{1,\ldots,\ell\}$, and restate Equation~\eqref{eq:5} as follows:
\begin{gather*}
\sum_{i=1}^{\ell} R_i(a)\alpha_i^{a+\eta q}=0 \, .
\end{gather*}
We dehomogenise the above equation by dividing by the first
summand, yielding 
\begin{equation}
\label{eq:6}
1=\sum_{i=2}^{\ell} (-R_i(a)/R_1(a))(\alpha_i/\alpha_1)^{a+\eta q}.
\end{equation}

Our goal is to apply Theorem~\ref{thm:AV09} in order to find a
homogeneous linear relation among the summands on the right-hand side
of~\eqref{eq:6}.  This will yield an equation similar to~\eqref{eq:5}
but with strictly fewer summands.  To this end, let $\Gamma$ be the
rank-one multiplicative subgroup of
$\left(\mathbb{K}^\ast\right)^{\ell-1}$ generated by
$\boldsymbol{\gamma}:= (\alpha_2/\alpha_1,\ldots,\alpha_\ell/\alpha_1)$. Then
Equation~\eqref{eq:6} can be written in the form
$ {\boldsymbol x}^\top {\boldsymbol y}=1$, where
\begin{equation}
\begin{array}{rcl}
{\boldsymbol x}&\, :=\, & \boldsymbol{\gamma}^{a+\eta q} \\
{\boldsymbol y}&\, :=\, &-(R_2(a)/R_1(a),\ldots,R_\ell(a)/R_1(a)) \, .
\end{array}
  \label{eq:eps}
\end{equation}
  Recalling that $h$ denotes absolute
logarithmic Weil height, we have the following claim, which is
proven in Section~\ref{sec:technical}.
\begin{restatable}{claim}{TECHNICAL}
  For $\boldsymbol x,\boldsymbol y$ in~\eqref{eq:eps} and 
  $\varepsilon:=(8k)^{-6k^3}$, if
  $2X > \mathcal T$ then
  $h({\boldsymbol y})\le (1+h({\boldsymbol x}))\varepsilon$.
\label{claim:heightbound}
\end{restatable}

Since the inequality $2X\leq \mathcal T$
implies the desired bound~\eqref{eq:desired},
by Claim~\ref{claim:heightbound} we may assume without loss of generality that
$h({\boldsymbol y})\le (1+h({\boldsymbol x}))\varepsilon$.  This height inequality
allows us to apply Theorem~\ref{thm:AV09} to conclude that there is a
collection of at most $(8k)^{6k^3(k+1)}$ vectors
${\bf A}=(A_2,\ldots,A_\ell)$ in ${\overline{{\mathbb Q}}}^{\ell-1}$
such that each solution of \eqref{eq:6} satisfies
\begin{gather}
\sum_{i=2}^{\ell} A_i R_i(a) \alpha_i^{a+\eta q}=0
\label{eq:REP}
\end{gather}
for one such $\bf A$.  We will use such linear relations to
proceed by induction.  

Fix a vector ${\bf A}=(A_2,\ldots,A_{\ell})$ among the
$(8k)^{6k^3(k+1)}$ possibilities and consider the corresponding version
of Equation~\eqref{eq:REP}.
Assume that at least three of the $A_i$'s are nonzero and re-index so
that the non-zero $A_i$'s have indices
$i=2,3,\ldots,\ell'$, where $\ell' \le \ell$.  We
dehomogenise the above equation to get
\begin{gather*}
1=\sum_{i=3}^{\ell'} (-A_i/A_2) (R_i(a)/R_2(a)) (\alpha_i/\alpha_2)^{a+\eta q}.
\end{gather*}
We now take
$\Gamma\subseteq \left(\mathbb{K}^\ast\right)^{\ell'-2}$ to
be the multiplicative subgroup (of rank at most two) generated by the vectors
$(\alpha_3/\alpha_2,\ldots,\alpha_{\ell'}/\alpha_2)$ and
$((-A_3/A_2),\ldots,(-A_{\ell'}/A_2))$. The above equation is again of
the form ${\boldsymbol x}^\top{\boldsymbol y}=1$, where now
\begin{equation}
  \begin{array}{rcl}
{\boldsymbol x}&\,:=\, &((-A_3/A_2) (\alpha_3/\alpha_2)^{a+\eta
                        q},\ldots,(-A_{\ell'}/A_2)
                        (\alpha_{\ell'}/\alpha_2)^{a+\eta q})) \\
    {\boldsymbol y} &:=& (R_3(a)/R_2(a),\ldots, 
                         R_{\ell'} (a)/R_2(a)) \, .
                         \end{array}
\label{eq:induct}
\end{equation}
To continue the induction we wish to establish
the height bound 
\begin{equation}
\label{eq:11}
h({\boldsymbol y})\le (1+h({\boldsymbol x}))\varepsilon \, ,
\end{equation}
with $\boldsymbol x,\boldsymbol y$ as in~\eqref{eq:induct} (
still with $\varepsilon=(8k)^{-6k^3}$).   The
challenge is that the components of ${\bf A}$ (arising from the application
of Theorem~\ref{thm:AV09}) are not known.  But in the case at hand
this is not a problem thanks to the following lemma, which is proved in
Section~\ref{sec:easy}:
\begin{restatable}{claim}{EASY}
  For $\boldsymbol x,\boldsymbol y$ in~\eqref{eq:induct} and
  $\varepsilon=(8k)^{-6k^3}$, if
  $2X>\mathcal T$,
  then there is at most one value of $a+\eta q$
  in~\eqref{eq:induct} for which
  $h({\boldsymbol y})\leq(1+h({\boldsymbol x}))\varepsilon$ fails.
\label{claim:easy}
\end{restatable}
If $2X\leq\mathcal T$, then
Inequality~\eqref{eq:desired} already holds.  We may therefore assume
the opposite inequality and apply Claim~\ref{claim:easy}.  It follows
that at most one solution of~\eqref{eq:REP} has a corresponding vector
$\boldsymbol x$ that fails to satisfy~\eqref{eq:11}; all remaining
solutions may be passed to the next induction step.

Within a fixed residue class modulo $M$, assign each hard solution
to one of the at most $2^k$ hard sub-equations of~\eqref{eq:comp}
that it satisfies, and organise the induction as a forest with these
sub-equations as roots.  At the $i$-th application of
Theorem~\ref{thm:AV09}, the multiplicative group has rank at most
$i$.  Assign every non-exceptional solution to one of the child
equations that it satisfies.  The number of nodes after $j$
applications is then at most
\[
 2^k\prod_{i=1}^j(8k)^{6k^3(k+i)}.
\]
At every non-initial node, Claim~\ref{claim:easy} allows at most one
exponent value to be stranded rather than passed to a child.  If no
two-term leaf contains two distinct exponent values, then each leaf
also contains at most one exponent value.  Counting one possible
stranded value at every node, including the roots, therefore gives the
upper bound
\begin{gather*}
 N_k
 \, :=\, 2^k\sum_{j=0}^{k-2}
       \prod_{i=1}^j(8k)^{6k^3(k+i)} \, < \, (8k)^{10k^5}
\end{gather*}
for the total number of solutions in the forest; here the product for
$j=0$ is empty.  Indeed,
\[
 \prod_{i=1}^j(8k)^{6k^3(k+i)}
 =(8k)^{6k^3\left(jk+j(j+1)/2\right)},
\]
and, for $0\leq j\leq k-2$, the exponent is at most
\[
 3k^3(k-2)(3k-1)\leq9k^5.
\]
The remaining factor $2^k(k-1)$ is at most $(8k)^{k^2}$ for $k\geq2$,
and $9k^5+k^2\leq10k^5$.

Consequently, more than $N_k$ solutions in a fixed residue class force
a two-term leaf containing two distinct exponent values.  Thus, for
some $i\ne j$, non-zero $A_i,A_j$, and $\eta\in\{\pm1\}$, one has
\[
 A_iR_i(a)\alpha_i^{a+\eta q}
 +A_jR_j(a)\alpha_j^{a+\eta q}=0,
\]
and the same equation holds with $(q,a)$ replaced by $(q',a')$, where
$a+\eta q\ne a'+\eta q'$.
Dividing
these two equations yields 
\begin{gather}
\frac{R_i(a)}{R_i(a')} \, \frac{R_j(a')}{R_j(a)} =\left(\frac{\alpha_j}{\alpha_i}\right)^{(a+\eta q)-(a'+\eta q')}.
\label{eq:take-heights}
\end{gather}

We now prepare to take heights in Equation~\eqref{eq:take-heights}.
Under our standing assumption that~\eqref{eq:desired} fails, the first
case in the proof of Claim~\ref{claim:heightbound} shows that both
$h(a)$ and $h(a')$ are at least $3k^2\log(k^2A)$.  Hence
Equation~\eqref{eq:XX} applies to both $a$ and $a'$.  Since
$a,a'\in B(X)$, for sufficiently large $X$ we have
$h(a),h(a')\leq\log_2X$, and therefore
\[
 h(R_i(a)/R_i(a'))
 \leq h(R_i(a))+h(R_i(a'))
 \leq2k\bigl(h(a)+h(a')\bigr)
 \leq4k\log_2X
\]
for $i=1,\ldots,\ell$.  Meanwhile, by~\eqref{eq:dob},
\[
 h(\alpha_i/\alpha_j)\geq
 \frac{2}{k^2(\log(3k^2))^3}.
\]
Since $|(a+\eta q)-(a'+\eta q')|\geq\sqrt{\log X}$, we have
\begin{gather*}
  8k \log_2 X>{\sqrt{\log X}} \,  h(\alpha_i/\alpha_j)>
\frac{2\sqrt{\log X}}{k^2(\log(3k^2))^3} \, .
\end{gather*}
The above inequality can be rewritten $\sqrt{\log X}<c \log_2 X$ where
$c:=4(k\log(3k^2))^3$.  Since $\log c\le20k^4$,
Corollary~\ref{cor:elementary} yields
$2X<\exp_5(10^{10}k^6)\le\mathcal T$, which implies
Inequality~\eqref{eq:desired}, our ultimate goal.

It remains to consider the case in which every residue class modulo
$M$ contributes at most $N_k$ solutions to the hard sub-equations.
Using~\eqref{eq:roots-unity-bound} and the estimate for $N_k$, we then
have
\[
 \frac{\log_4X}{2k!}
 <MN_k
 \leq2(k!)^2(8k)^{10k^5}.
\]
Thus
\[
 \log_4X<4(k!)^3(8k)^{10k^5}.
\]
For $k\geq2$,
\[
 \log4+3\log(k!)+10k^5\log(8k)
 \leq55k^5\log k,
\]
and consequently
\[
 X<\exp_5(55k^5\log k).
\]
Since $55k^5\log k<10^{10}k^6$ for $k\geq2$, this again implies
Inequality~\eqref{eq:desired}.
This concludes the proof of Theorem~\ref{thm:1}.

\section{The Density of \texorpdfstring{${\mathcal S}$}{S}}
\label{sec:density}
This section is devoted to a proof of the following result.
\begin{theorem}
  The set $\mathcal S$ has lower density at least $1/8$
  unconditionally and has density $1$ subject to
  Martin's uniform formulation of the Bateman--Horn conjecture
  (Hypothesis UH).
  \label{thm:density}
\end{theorem}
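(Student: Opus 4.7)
The plan is to bound $|\mathcal{S}(X)|$ from below for each $X = 2^k$ via a first- and second-moment analysis of the representation count $r(n)$ on $[X, 2X]$, and then assemble the results to obtain the lower density of $\mathcal{S}$. For the first moment, the prime number theorem gives $\pi(2X/q) - \pi(X/q) = (X/(q \log X))(1+o(1))$ primes $P$ with $Pq + a \in [X, 2X]$ for each pair $(q,a)$. Summing over primes $q \in A(X)$ (using Mertens' estimate $\sum_{q\text{ prime} \in A(X)} 1/q = \log_3 X(1+o(1))$) and integers $a \in B(X)$ (there being $\log X/\sqrt{\log_3 X}(1+o(1))$ of them), the mean $M(X) := X^{-1}\sum_{n \in [X,2X]} r(n)$ satisfies $M(X) = \sqrt{\log_3 X}(1+o(1))$, which is much larger than $\log_4 X$.

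For the second moment $\sum_n r(n)^2$, I count ordered pairs of representations $(q,P,a),(q',P',a')$ of the same integer. The diagonal contributes exactly $XM(X)$, while off-diagonal contributions reduce to counting prime pairs $(P,P')$ satisfying the linear Diophantine equation $qP - q'P' = a' - a$, which is the two-variable Bateman-Horn setup of Conjecture~\ref{conj:bateman-horn}. Unconditionally, the Brun-Wu bound~\eqref{eq:sieve} gives at most $\frac{\Delta}{\varphi(\Delta)} \cdot \frac{\kappa X}{q (\log X)^2}$ such pairs with $\kappa = 3.418$; summing over $q,q' \in A(X)$ prime and $a,a' \in B(X)$, and using standard Mertens-type estimates to control the average of $\Delta/\varphi(\Delta)$, this yields $\sum_n r(n)^2 \leq (\kappa+o(1))XM(X)^2$. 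Under Bateman-Horn, the asymptotic~\eqref{eq:bateman-horn} sharpens this to $(1+o(1))XM(X)^2$. Cauchy-Schwarz applied to $r(n)$ restricted to the set $\{n : r(n) > \log_4 X\}$ then gives
\[
|\{n \in [X,2X] : r(n) > \log_4 X\}| \geq \frac{(XM(X) - X \log_4 X)^2}{\sum_n r(n)^2} \geq (1/\kappa - o(1)) X
\]
unconditionally, and $(1-o(1))X$ under Bateman-Horn. Since $1/\kappa > 0.29$, this is the source of the claimed numerical constant.

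To finish, I bound $C(X)$, the number of $n \in [X, 2X]$ admitting at least one correlated pair of representations. Each correlated pair is parameterised by $(q, q', \eta, d, a)$ with $q \neq q'$, $\eta \in \{\pm 1\}$, $|d| < \sqrt{\log X}$, and $a'$ determined by the correlation constraint; the associated prime pair $(P, P')$ again satisfies a linear Diophantine equation, and~\eqref{eq:sieve} yields an upper bound of order $X/(q(\log X)^2)$ on the count of such pairs. Summing over $(q, q', \eta, d, a)$, the total number of correlated configurations is $\ll X \sqrt{\log_3 X}/\log_2 X = o(X)$, so $C(X) = o(X)$. Combining, $|\mathcal{S}(X)| \geq (1/\kappa - o(1))X$ unconditionally and $(1-o(1))X$ under Bateman-Horn. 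Since the same moment bounds apply to sub-intervals $[2^k, N] \subseteq [2^k, 2^{k+1}]$, these estimates transfer to the lower density of $\mathcal{S} = \bigcup_{k \geq 10} \mathcal{S}(2^k)$ in $\mathbb{N}$, giving lower density at least $0.29$ (respectively, density one under Bateman-Horn).

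The main obstacle is the unconditional second-moment estimate: one must carefully show that the sum of the $\Delta/\varphi(\Delta)$ factors from~\eqref{eq:sieve} over the four-parameter family $(q, q', a, a')$ contributes only a $(1+o(1))$ factor, so that the constant $\kappa$ is not enlarged. The correlation analysis and the passage from $|\mathcal{S}(X)|$ bounds to the lower density of the full union $\mathcal{S}$ are routine by comparison, as is the Bateman-Horn-conditional density-one conclusion.
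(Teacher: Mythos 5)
Your overall architecture---a first- and second-moment analysis of $r(n)$ on $[X,2X]$, Cauchy--Schwarz to lower-bound $\#\{n : r(n)>\log_4 X\}$, and a separate argument that correlated representations are rare---is the same as the paper's, and your first-moment computation, your handling of the correlation count as $o(X)$, and your Cauchy--Schwarz step are all correct in substance.

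The genuine gap is in the unconditional second-moment estimate, and it is a little deeper than the caveat you flag at the end. You propose to bound the off-diagonal terms using~\eqref{eq:sieve}, whose constant is $\Delta/\varphi(\Delta) \asymp |a-a'|/\varphi(|a-a'|)$, and to ``carefully show that the sum of the $\Delta/\varphi(\Delta)$ factors \dots contributes only a $(1+o(1))$ factor.'' But this is \emph{false}: the mean value of $m/\varphi(m)$ over $m\leq Y$ is $\zeta(2)\zeta(3)/\zeta(6)\approx 1.94$, not $1$. Using~\eqref{eq:sieve} would therefore inflate your second moment by roughly a factor of $1.94$, worsening the unconditional lower density to about $1/(1.94\kappa)\approx 0.15$, which falls short of the claimed $0.29$. (Relatedly, \eqref{eq:sieve} is only a $\ll$ bound and carries an unspecified implied constant, so the explicit $\kappa=3.418$ you quote does not actually live there; it lives in~\eqref{eq:BRUN}.)

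What the paper actually does is use~\eqref{eq:BRUN} with the \emph{precise} Bateman--Horn singular series $C_f = C\,g(|a-a'|)$ (where $C=2\prod_{p>2}\frac{p(p-2)}{(p-1)^2}$ and $g$ is the multiplicative function $g(m)=\prod_{p\mid m,\,p>2}\frac{p-1}{p-2}$), not the cruder $\Delta/\varphi(\Delta)$. The decisive arithmetic input is then Theorem~\ref{thm:tenenbaum} (Tenenbaum's mean-value theorem), which gives
\[
\sum_{\substack{m\leq Y\\2\mid m}} g(m) = \frac{Y+o(Y)}{2}\prod_{p>2}\Bigl(1+\tfrac{1}{p(p-2)}\Bigr) = \frac{Y+o(Y)}{C},
\]
so the average of $C_f$ over the even differences $a-a'$ is $1+o(1)$ \emph{exactly}, and the unconditional second moment is $\leq \kappa(1+o(1))X\log_3 X$ as needed. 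To close your argument you must replace $\Delta/\varphi(\Delta)$ with $C_f$ throughout and supply this mean-value identity; without it, the constant $0.29$ is not reachable. (You also have a small typo: the per-configuration bound should carry $1/(qq')$, not $1/q$.)
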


Recall from Section~\ref{sec:2} that we exclude from $\mathcal S$ all
$n\in\mathbb{N}$ that have two correlated representations.  In
Section~\ref{sec:count0} we show that the set of numbers thus excluded
has density zero.  In Section~\ref{sec:count} we show,
assuming Hypothesis UH, that
\begin{gather}
  \# \left\{ n \in [X,2X]: r(n)>\log_4  X\right\} = (1+o(1)) X \, . 
\label{eq:ultimate}
\end{gather} 
We also show unconditionally that
\begin{gather}
  \# \left\{ n \in [X,2X]: r(n)>\log_4 X\right\}
  \geq ((1/\kappa)+o(1))X,
\label{eq:ultimate2}
\end{gather}
where $\kappa=8$ is the absolute constant in
Inequality~\eqref{eq:BRUN}.  Together with
Lemma~\ref{lem:far}, these estimates give the corresponding local
bounds for $|\mathcal S(X)|$.  At the end of the section we pass from
these local bounds to the density of $\mathcal S$.

In this section the indices $p,q,P,P'$ in summations and
products  run over positive primes.

The following elementary parametrisation will be used twice.

\begin{lemma}
\label{lem:param}
Let $q\ne q'$ be primes in $A(X)$, let $a,a'\in B(X)$ with
$d:=a-a'\ne0$, and let $P_0$ be the unique integer with
$0\leq P_0<q'$ and $qP_0\equiv-d\pmod{q'}$.  Put
$P'_0:=(qP_0+d)/q'\in\mathbb Z$, so that $qP_0-q'P'_0=-d$ and
\[
 |P_0|<q',\qquad |P'_0|\leq q+\frac{|d|}{q'}\ll\log X.
\]
Then the integer solutions $(P,P')$ of
\[
 qP+a=q'P'+a'\in[X,2X]
\]
are exactly the pairs $\bigl(f_1(t),f_2(t)\bigr)$ with
\[
 f_1(t):=P_0+q't,\qquad f_2(t):=P'_0+qt,
\]
where $t$ ranges over the integers of the interval $[T_1,T_2]$,
$T_j:=(jX-a-qP_0)/(qq')$.  In particular $T_2-T_1=X/(qq')$ and
$T_1\asymp T_2\asymp X/(qq')\gg X/\log X$; the forms $f_1,f_2$ have
coefficients $O(\log X)$ and coprime leading coefficients; and, in the
notation of~\eqref{eq:sieve}, $E=qq'|d|\ne0$.
\end{lemma}
\begin{proof}
The congruence $qP\equiv a'-a\pmod{q'}$ determines $P$ modulo $q'$,
and $qP_0\equiv-d\pmod{q'}$ by construction, so $P=P_0+q't$ for some
$t\in\mathbb Z$; then $P'=(qP+d)/q'=P'_0+qt$.  The membership
$qP+a\in[X,2X]$ is equivalent to $t\in[T_1,T_2]$.  The remaining
assertions are immediate, using $q,q'\le\sqrt{\log X}$, $|d|\le2\log
X/\sqrt{\log_3X}$ and $E=|q'q(q'P'_0-qP_0)|=qq'|d|$.
\end{proof}

\subsection{Counting correlated representations}
\label{sec:count0}
We will need the following simple fact:
\begin{proposition}
$\sum_{q \in A(X)} \frac{1}{q} \sim \log_3 X$.
\label{prop:inv-bound}
\end{proposition}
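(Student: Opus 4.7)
The plan is to deduce this directly from Mertens' second theorem, which asserts that
\[ \sum_{p \leq y,\ p \text{ prime}} \frac{1}{p} = \log\log y + M + O\!\left(\frac{1}{\log y}\right), \]
where $M$ is the Meissel--Mertens constant. Writing the sum over $q \in A(X)$ as a difference of two partial sums, I would apply Mertens at the endpoints $y = \sqrt{\log X}$ and $y = \log_2 X$, yielding
\[ \sum_{q \in A(X)} \frac{1}{q} = \log\log\sqrt{\log X} - \log\log\log_2 X + O\!\left(\frac{1}{\log_2 X}\right). \]

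Next I would simplify the two iterated logarithms. For the upper endpoint,
\[ \log\log\sqrt{\log X} = \log\!\left(\tfrac{1}{2}\log_2 X\right) = \log_3 X - \log 2, \]
and for the lower endpoint, $\log\log\log_2 X = \log_4 X$, valid once $X$ is large enough that the iterated logarithms exceed $1$ (so the $\max$ in the definition of $\log_k$ is inactive). Substituting gives
\[ \sum_{q \in A(X)} \frac{1}{q} = \log_3 X - \log_4 X - \log 2 + O\!\left(\frac{1}{\log_2 X}\right). \]
Since $\log_4 X = o(\log_3 X)$ as $X \to \infty$, dividing through by $\log_3 X$ shows the ratio tends to $1$, which is exactly the asymptotic $\sim \log_3 X$ claimed.

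No step here is really an obstacle; the only minor point to check is that Mertens' theorem is applicable at both endpoints, which requires only that $\log_2 X \to \infty$ (so the error term is genuinely $o(1)$ relative to the main term). The argument is essentially a one-line consequence of Mertens once the iterated-logarithm expressions are unwound.
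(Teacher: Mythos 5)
Your proof is correct and takes essentially the same approach as the paper: the paper's argument is simply to invoke Mertens' second theorem at the two endpoints of $A(X)$, and you have carried out exactly that computation with the iterated logarithms unwound explicitly.
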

\begin{proof}
  
  By Mertens's theorem (see, e.g.,
  \cite[Theorem~427]{hardywright54}) we have $\sum_{q \leq n}
\frac{1}{q} = \log \log  n + O(1)$.
Since $A(X) = [\log_2 X,\sqrt{\log X}]$, it follows that
$\sum_{q \in A(X)} \frac{1}{q}=\log\log (\sqrt{\log X}) - \log_4 X+O(1)$,
which directly gives the result.
\end{proof}

We now have:
\begin{lemma}
\label{lem:far}
The set of $n\in[X,2X]$ having two distinct representations
$n=Pq+a=P'q'+a'$ for which
\[
 |(a+\eta q)-(a'+\eta q')|<\sqrt{\log X}
\]
for some $\eta\in\{\pm1\}$ has cardinality
$O(X/(\log X)^{1/3})$.
\end{lemma}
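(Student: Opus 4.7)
The plan is to bound the number of such $n$ by the number of 6-tuples $(q,q',a,a',P,P')$ arising from a pair of correlated representations, and then apply the Bateman--Horn sieve upper bound~\eqref{eq:sieve} uniformly over the parameters. Since the two representations share $n$, the primes $P,P'$ are linked by the linear Diophantine equation $Pq-P'q'=c$, where $c:=a'-a\ne 0$ because $a\ne a'$.

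First I fix distinct primes $q,q'\in A(X)$ and distinct $a,a'\in B(X)$ satisfying the correlation condition. Since $\gcd(q,q')=1$, the integer solutions of $Pq-P'q'=c$ form a single arithmetic progression $(P,P')=(P_0+tq',\,P_0'+tq)$ with $t\in\mathbb{Z}$, and the constraint $Pq+a\in[X,2X]$ restricts $t$ to an interval of length $T=O(X/(qq'))$. On this range I want both $f_1(t):=q't+P_0$ and $f_2(t):=qt+P_0'$ to be prime, and the associated sieve discriminant is $\Delta=|qq'(q'P_0'-qP_0)|=qq'|c|$. Inequality~\eqref{eq:sieve} then yields an upper bound of
\begin{gather*}
O\!\left(\frac{qq'|c|}{\varphi(qq'|c|)}\cdot\frac{X}{qq'(\log X)^{2}}\right)
\end{gather*}
on the number of such prime pairs, using that $\log T\sim\log X$ because $qq'\le \log X$. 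Non-admissible parameter choices contribute at most two solutions each and are subsumed by the error term.

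Next I sum this bound over $(q,q',a,a')$. Using $d/\varphi(d)=O(\log\log d)$ together with $qq'|c|\ll(\log X)^{3}$, the factor $qq'|c|/\varphi(qq'|c|)$ is $O(\log_{3}X)$ uniformly. The outer sum over distinct primes $q,q'\in A(X)$ of $1/(qq')$ is at most $\bigl(\sum_{q\in A(X)}1/q\bigr)^{2}\sim(\log_{3}X)^{2}$ by Proposition~\ref{prop:inv-bound}. For fixed $(q,q',\eta)$ the correlation condition pins $a'$ into an interval of length $2\sqrt{\log X}$ once $a\in B(X)$ is chosen, so the number of correlated pairs $(a,a')$ is at most $O(|B(X)|\sqrt{\log X})=O((\log X)^{3/2}/\sqrt{\log_{3}X})$. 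Multiplying these bounds gives a total of order $X(\log_{3}X)^{5/2}/\sqrt{\log X}$, which is comfortably $O(X/(\log X)^{1/3})$ as required.

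The only genuinely technical step is the clean application of~\eqref{eq:sieve}: one parametrises the solutions of $Pq-P'q'=c$ by a single integer $t$, computes the resulting sieve discriminant as $qq'|c|$, and handles the loss from $\log T$ to $\log X$ (which is negligible since $qq'\le \log X$). Everything else is routine arithmetic over the parameter ranges; the crucial saving is the $\sqrt{\log X}$ gain that the correlation condition provides when summing over $(a,a')$ instead of $|B(X)|^{2}$.
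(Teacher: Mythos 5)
Your proposal is correct and takes essentially the same route as the paper: fix $(q,q',a,a')$, parametrise the solutions of $Pq - P'q' = a'-a$ by an integer $t$, apply the sieve upper bound~\eqref{eq:sieve} with discriminant $\Delta = qq'|a-a'|$, and then sum over the parameter ranges using the correlation constraint to restrict $a'$ to $O(\sqrt{\log X})$ choices, arriving at the same $O(X(\log_3 X)^{5/2}/\sqrt{\log X})$ bound. The only cosmetic difference is that you bound $\Delta/\varphi(\Delta)$ directly by $\log\log\Delta \ll \log_3 X$, whereas the paper first drops the $q,q'$ factors (each $\frac{p}{p-1} = 1+o(1)$ since $q,q' \geq \log_2 X$) and bounds $|a-a'|/\varphi(|a-a'|)$; both give $O(\log_3 X)$.
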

\begin{proof}
We split the count according to whether $q=q'$ and whether $a=a'$.
First suppose that $q\ne q'$ and $a\ne a'$.  For fixed such parameters,
we count pairs of primes $P,P'$ satisfying
\begin{gather}
 qP+a=q'P'+a'\in[X,2X].
 \label{eq:corr}
\end{gather}
By Lemma~\ref{lem:param}, these pairs are
$\bigl(f_1(t),f_2(t)\bigr)$ for integers $t$ in an interval of length
$X/(qq')$ contained in $[0,T]$ with $T\asymp X/(qq')$, and
$E=qq'|d|\ne0$ with $\gcd(q',q)=1$.  Hence the sieve
bound~\eqref{eq:sieve} gives
\[
 \#\{t:f_1(t),\ f_2(t)\text{ are prime}\}
 \ll\frac{E}{\varphi(E)}\cdot\frac{X}{qq'(\log X)^2}
 \ll\frac{X\log_3X}{qq'(\log X)^2},
\]
using $m/\varphi(m)\ll\log\log m$
(see~\cite[Theorem~328]{hardywright54}).  The correlation inequality
and $q,q'\leq\sqrt{\log X}$ imply
$|a-a'|<2\sqrt{\log X}$.  Summing over the choices of $a,a',q,q'$ and
using Proposition~\ref{prop:inv-bound}, we obtain
\[
 \ll\frac{X\log_3X}{(\log X)^2}
 \left(\sum_{q\in A(X)}\frac1q\right)^2
 \frac{\log X\sqrt{\log X}}{\sqrt{\log_3X}}
 \ll\frac{X(\log_3X)^{5/2}}{\sqrt{\log X}}
 =O\!\left(\frac{X}{(\log X)^{1/3}}\right).
\]

Next suppose that $q=q'$.  Since the representations are distinct,
$a\ne a'$, and the equality of the representations shows that
$q\mid a'-a$.  Put $h:=(a'-a)/q\ne0$.  Then
$P'=P-h$, while correlation is equivalent to
$|a-a'|<\sqrt{\log X}$, so $|h|<\sqrt{\log X}/q$.  For fixed $q,a,h$,
the sieve bound~\eqref{eq:sieve}, applied to the forms $t$ and $t-h$
on an interval of length $\ll X/q$, gives
\[
 \#\{P:P,\ P-h\text{ are prime}\}
 \ll\frac{X}{q(\log X)^2}\frac{|h|}{\varphi(|h|)}
 \ll\frac{X\log_3X}{q(\log X)^2}.
\]
There are $O(\log X/\sqrt{\log_3X})$ choices for $a$ and
$O(\sqrt{\log X}/q)$ choices for $h$.  Hence the total contribution in
this case is
\[
 \ll\frac{X\sqrt{\log_3X}}{\sqrt{\log X}}
 \sum_{q\in A(X)}\frac1{q^2}
 \ll\frac{X\sqrt{\log_3X}}
 {\sqrt{\log X}\,\log_2X}
 =O\!\left(\frac{X}{(\log X)^{1/3}}\right).
\]

Finally, suppose that $a=a'$ and $q\ne q'$.  Then $qP=q'P'$, and
unique factorisation of a product of two primes gives $P=q'$ and
$P'=q$.  Thus $n=qq'+a\ll\log X$, which is incompatible with
$n\in[X,2X]$ for all sufficiently large $X$.  The case $a=a'$ and
$q=q'$ would give the same representation and therefore does not
occur.  Combining the three cases proves the lemma.
\end{proof}

\subsection{Counting all representations}
\label{sec:count}
In view of Lemma~\ref{lem:far}, to show that $\mathcal S$ has density
$1$ it
suffices to establish Equation~\eqref{eq:ultimate}.
We will use the Cauchy-Schwarz inequality.  To set this up, for $i\in \{0,1,2\}$ write
\[ M_i(X) = \sum_{\substack{n\in[X,2X]\\r(n)>\log_4 X}} r(n)^i \, .\]

We estimate the first moment $M_1(X)$ as follows.  We have 
\begin{align}
\sum_{n\in [X,2X]}r(n)&=\sum_{\substack{q\in A(X) \\ a\in
  B(X)}}\,\,\sum_{\frac{X-a}{q}\le P\le \frac{2X-a}{q}}1 \notag \\
&=(1+o(1))\sum_{\substack{q\in A(X) \\ a\in B(X)}}\frac{X}{q\log{X}}
  \quad\text{(by the Prime Number Theorem)} \notag \\
&=(1+o(1))X{\sqrt{\log_3 X}} \quad\text{(by
                                                        Proposition~\ref{prop:inv-bound}).}
                                                        \label{eq:XXX}
\end{align}
It follows immediately that $M_1(X)=(1+o(1))X\sqrt{\log_3 X}$.

Turning to the second moment $M_2(X)$,
if we were able to show that 
\begin{equation}
\sum_{n\in[X,2X]}r(n)^2=(1+o(1))X \log_3 X 
\label{eq:SecondMoment}
\end{equation}
then it would follow that $M_2(X) = (1+o(1)) X \log_3 X$ and hence,
by the Cauchy-Schwarz inequality $M_0(X)M_2(X) \geq M_1(X)^2$, that
$M_0(X) = (1+o(1))X$, which would establish~\eqref{eq:ultimate}.

It thus suffices to establish Equation~\eqref{eq:SecondMoment}.  This
is out of the reach of unconditional techniques, but
it follows from Hypothesis UH, stated in Conjecture~\ref{conj:bateman-horn}.

Fix $a\ne a'\in B(X)$ and $q\ne q'\in A(X)$, put $d:=a-a'$, and suppose
that $\gcd(d,qq')=1$ and $2\mid d$.  We consider pairs of primes $P,P'$
such that
\begin{gather}
  qP+a=q'P'+a' \in [X,2X].
\label{eq:CC}
\end{gather}
Let $P_0,P'_0,f_1,f_2,T_1,T_2$ be as in
Lemma~\ref{lem:param}.  By that lemma, the number of solutions of
Equation~\eqref{eq:CC} in primes $P$ and $P'$ is, up to an $O(1)$
endpoint error, the number of integers $t\in[T_1,T_2]$ for which
$f_1(t)$ and $f_2(t)$ are both prime; moreover $T_2-T_1=X/(qq')$,
$T_1\gg X/\log X$, the coefficients of $f_1,f_2$ are $O(\log X)$, and
the coefficients of $f=f_1f_2$ are $(\log X)^{O(1)}$.

Set
\[
C:=2\prod_{p>2}\frac{p(p-2)}{(p-1)^2}\approx1.32,
\qquad
g(m):=\prod_{\substack{p\mid m\\p>2}}\frac{p-1}{p-2}.
\]
For sufficiently large $X$, the primes $q,q'$ are odd.  We record the
local-factor calculation explicitly.

\begin{claim}
\label{claim:singular-series}
Under the standing assumptions $q\ne q'$, $\gcd(d,qq')=1$, and
$2\mid d$, the pair $f_1,f_2$ is admissible and, for $f:=f_1f_2$,
its Bateman--Horn constant is
\[
C_f=Cg(q)g(q')g(|d|)=Cg(qq'|d|).
\]
\end{claim}
\begin{proof}
Let $\omega_f(p)$ denote the number of roots of $f$ modulo $p$.  If
$p>2$ and $p\nmid qq'$, then each of $f_1$ and $f_2$ has one root modulo $p$.
These roots coincide if and only if
$qP_0\equiv q'P'_0\pmod p$, which, since
$qP_0-q'P'_0=-d$, is equivalent to $p\mid d$.  Hence
$\omega_f(p)=1$ when $p\mid d$ and $\omega_f(p)=2$ otherwise.

At $p=q$, the form $f_1$ has one root, whereas $f_2$ is a non-zero
constant: indeed $q'P'_0\equiv d\not\equiv0\pmod q$.  Thus
$\omega_f(q)=1$.  Similarly, $\omega_f(q')=1$.  Finally, since $q,q'$
are odd and $d$ is even, the relation $qP_0-q'P'_0=-d$ shows that
$P_0\equiv P'_0\pmod2$; the two roots modulo $2$ therefore coincide,
so $\omega_f(2)=1$.  In particular, $\omega_f(p)<p$ for every prime
$p$, and the pair is admissible.

Relative to the generic odd-prime local factor
$p(p-2)/(p-1)^2$, each odd prime for which $\omega_f(p)=1$ contributes
the correction factor
\[
\frac{p/(p-1)}{p(p-2)/(p-1)^2}=\frac{p-1}{p-2}.
\]
The sets of odd primes dividing $d$, equal to $q$, and equal to $q'$
are disjoint by the hypotheses.  Including the local factor $2$ at
$p=2$, we obtain
\[
C_f=2\prod_{p>2}\frac{p(p-2)}{(p-1)^2}
\,g(q)g(q')g(|d|)=Cg(q)g(q')g(|d|),
\]
as claimed.
\end{proof}

We apply Conjecture~\ref{conj:bateman-horn} at the two endpoints
$T_1,T_2$.  Since the coefficients of $f$ are $(\log X)^{O(1)}$ and
$T_1,T_2\gg X/\log X$, Martin's coefficient condition is satisfied
uniformly for all the quadruples under consideration.  Moreover, for
$t\in[T_1,T_2]$,
\[
\log f_1(t)=\log X+O(\log_2 X),\qquad
\log f_2(t)=\log X+O(\log_2 X),
\]
uniformly in $a,a',q,q'$.  Hence
\[
\operatorname{li}(f;T_2)-\operatorname{li}(f;T_1)
=\frac{X}{qq'(\log X)^2}
+O\!\left(\frac{X\log_2 X}{qq'(\log X)^3}\right).
\]
It follows that the number of prime pairs satisfying~\eqref{eq:CC} is
\begin{gather}
\frac{C_fX}{qq'(\log X)^2}
+O\!\left(\frac{C_fX\log_2 X}{qq'(\log X)^3}+1\right),
\label{eq:est}
\end{gather}
uniformly in $a,a',q,q'$.  By Claim~\ref{claim:singular-series}, the
main term in~\eqref{eq:est} is
\[
C\frac{X}{qq'(\log X)^2}g(q)g(q')g(|a-a'|).
\]

For completeness, if $2\nmid d$, then any prime solution has
$P=2$ or $P'=2$.  If $\gcd(d,qq')>1$, then $q\mid d$ forces $P'=q$,
and $q'\mid d$ forces $P=q'$.  Thus, for every exceptional quadruple
$(a,a',q,q')$, only $O(1)$ prime pairs can occur.  Summed over all
parameters, these exceptional solutions contribute
$O(|A(X)|^2|B(X)|^2)=O((\log X)^3)$, which is absorbed by the error term
used below.

We also record the contributions not covered by the assumptions
$q\ne q'$ and $a\ne a'$.  Identical representations contribute
\[
 \sum_{n\in[X,2X]}r(n)=O\!\left(X\sqrt{\log_3X}\right)
 =o(X\log_3X).
\]
If $q=q'$ but $a\ne a'$, put $h:=(a'-a)/q\ne0$.  Then
$P'=P-h$ and $|h|\ll |B(X)|/q$.  For fixed $q,a,h$, the sieve
bound~\eqref{eq:sieve}, applied to the forms $t$ and $t-h$ on the
relevant interval of length $\asymp X/q$, gives
\[
 \ll\frac{X\log_3X}{q(\log X)^2}
\]
prime pairs.  Summing over $a$, $h$, and $q$ yields the following
bound; the final $O((\log X)^2)$ term absorbs the $O(1)$
contributions from endpoint and locally obstructed cases:
\begin{align*}
 &\ll\frac{X\log_3X}{(\log X)^2}
 |B(X)|^2\sum_{q\in A(X)}\frac1{q^2}+O((\log X)^2)\\
 &\ll X\sum_{q\in A(X)}\frac1{q^2}+O((\log X)^2)
 \ll\frac{X}{\log_2X}+O((\log X)^2)=o(X).
\end{align*}
Finally, if $a=a'$ and $q\ne q'$, then $qP=q'P'$, so unique
factorisation gives $P=q'$ and $P'=q$.  Hence
$n=qq'+a\ll\log X$, which is impossible for $n\in[X,2X]$ once $X$
is sufficiently large.  Thus all of these omitted configurations
contribute $o(X\log_3X)$ in total.

We first check that the uniform error in~\eqref{eq:est} is summable.
For $m\geq1$,
\[
g(m)\ll\frac{m}{\varphi(m)}\ll\log_2(3m)
\quad\text{(see~\cite[Theorem~328]{hardywright54})},
\]
and therefore
\[
\sum_{q\in A(X)}\frac{g(q)}q\ll\log_3 X,
\qquad
\sum_{\substack{a,a'\in B(X)\\a\ne a'}}g(|a-a'|)
\ll\frac{(\log X)^2\log_2 X}{\log_3 X}.
\]
Consequently,
\[
\sum_{\substack{a\ne a'\in B(X)\\q\ne q'\in A(X)\\
2\mid(a-a')\\\gcd(a-a',qq')=1}}
\frac{C_f}{qq'}
\ll(\log X)^2\log_2 X\log_3 X.
\]
The sum of the error terms in~\eqref{eq:est} is thus
\[
O\!\left(
\frac{X(\log_2 X)^2\log_3 X}{\log X}+(\log X)^3
\right)=o(X\log_3 X).
\]
Taking stock, and also including the diagonal and exceptional
contributions, we obtain
\begin{align*}
  \sum_{n\in[X,2X]}r(n)^2
  &=\sum_{\substack{a,a'\in B(X)\\ q,q'\in A(X)}}
  \; \sum_{\substack{P,P'\\ qP+a=q'P'+a'\in[X,2X]}}1  \\
  &=\frac{CX}{(\log X)^2}
  \sum_{\substack{a\ne a'\in B(X)\\ q\ne q'\in A(X)\\ 2\mid(a-a')\\
  \gcd(a-a',qq')=1}}
  \frac{g(q)g(q')g(|a-a'|)}{qq'}
  +o(X\log_3 X).
\end{align*}
We now check that the two new factors $g(q)g(q')$ do not change the
main term.  For a fixed non-zero $d=a-a'$ put
\[
S_d(X):=\sum_{\substack{q\in A(X)\\q\nmid d}}\frac{g(q)}q.
\]
Uniformly for $q\in A(X)$,
\[
g(q)=\frac{q-1}{q-2}=1+O\left(\frac1{\log_2 X}\right),
\qquad
\sum_{q\in A(X)}\frac{g(q)-1}{q}
 \ll\frac1{\log_2 X}.
\]
Moreover, $|d|\leq\log X/\sqrt{\log_3 X}$, and every prime divisor of
$d$ lying in $A(X)$ is at least $\log_2 X$.  It follows that
\[
\#\{q\in A(X):q\mid d\}\leq
\frac{\log|d|}{\log(\log_2 X)}
\leq\frac{\log_2 X}{\log_3 X},
\]
and hence
\[
\sum_{\substack{q\in A(X)\\q\mid d}}\frac{g(q)}q
\ll\frac1{\log_3 X}.
\]
By Proposition~\ref{prop:inv-bound}, therefore,
\[
S_d(X)=(1+o(1))\log_3 X
\]
uniformly for $d=a-a'$ under consideration.  Finally,
\[
\sum_{q\in A(X)}\frac{g(q)^2}{q^2}\ll\frac1{\log_2 X},
\]
so deleting the diagonal $q=q'$ is negligible.  Consequently,
\[
\sum_{\substack{q\ne q'\in A(X)\\\gcd(d,qq')=1}}
\frac{g(q)g(q')}{qq'}=(1+o(1))(\log_3 X)^2
\]
uniformly in $d$.  Substitution in the preceding second-moment sum yields
\begin{align}
\sum_{n\in[X,2X]}r(n)^2
&=(C+o(1))\frac{X(\log_3 X)^2}{(\log X)^2}
\sum_{\substack{a\ne a'\in B(X)\\2\mid(a-a')}}g(|a-a'|) \notag\\
&\qquad+o(X\log_3 X).
\label{eq:stock}
\end{align}

Next we simplify the expression on the right-hand side of
Equation~\eqref{eq:stock}.
For integers $a\ne a'$, let
\[
 \mathfrak S(\{a,a'\}):=
 \begin{cases}
   C\,g(|a-a'|) & \text{if } 2\mid a-a',\\
   0 & \text{otherwise},
 \end{cases}
\]
denote the singular series of the pair $\{a,a'\}$.  A classical
theorem of Goldston on the mean value of the singular series (see
Montgomery and Soundararajan~\cite[Equation~(48)]{MS04}; a weaker
form, which suffices here, goes back to Bombieri and
Davenport~\cite{BD66}) asserts that
\[
 \sum_{\substack{d_1,d_2\le h\\ d_1\ne d_2}}
 \mathfrak S(\{d_1,d_2\})
 =h^2-h\log h+O(h).
\]
Since $\mathfrak S$ depends only on the difference of its arguments
and $B(X)$ is an interval containing $h:=\log
X/\sqrt{\log_3X}+O(1)$ integers, it follows that
\begin{align*}
\sum_{\substack{a\ne a'\in B(X)\\
  2\mid(a-a')}}g(|a-a'|)
 =\frac1C\sum_{a\ne a'\in B(X)}\mathfrak S(\{a,a'\})
 =\frac{1+o(1)}{C} \frac{(\log X)^2}{\log_3{X}} \, .
\end{align*}%
Substituting the above equation into Equation~\eqref{eq:stock}, we
deduce Equation~\eqref{eq:SecondMoment}.  Together with
Lemma~\ref{lem:far}, this proves, subject to
Hypothesis UH, that
\[
 |\mathcal S(X)|=(1+o(1))X.
\]

For the unconditional lower bound, one instead uses the
uniform upper bound~\eqref{eq:BRUN}; the derivation of
Equation~\eqref{eq:SecondMoment} then shows, \emph{mutatis mutandis},
that
\begin{gather}
 \sum_{n\in[X,2X]}r(n)^2
 \leq\kappa(1+o(1))X\log_3X,
 \qquad \kappa=8.
 \label{eq:END2}
\end{gather}
The Cauchy--Schwarz inequality now yields
$M_0(X)\geq((1/8)+o(1))X$.  After removing the
$o(X)$ integers counted in Lemma~\ref{lem:far}, we obtain
\[
 |\mathcal S(X)|\geq((1/8)+o(1))X.
\]

It remains to pass from these local estimates to the union
$\mathcal S$.  Given $N$, put $X_j:=\lfloor N/2^j\rfloor$ for those
$j\geq1$ with $X_j\geq2^{10}$.  The intervals $[X_j,2X_j]$ have
disjoint interiors; consecutive intervals have a gap of at most one
integer, and
\[
 \sum_jX_j=N+O(\log N).
\]
Since $\mathcal S(X_j)\subseteq\mathcal S$, summing the local bounds
and subtracting $O(\log N)$ for common endpoints gives the following.
For every $\varepsilon>0$, after discarding the finitely many indices
with $X_j$ below the threshold required by the local estimate,
\[
 \#(\mathcal S\cap[1,N])
 \geq(1/8-\varepsilon)N-o(N)
\]
unconditionally.  Hence the lower density of $\mathcal S$ is at least
$1/8$.  Under Martin's uniform Bateman--Horn hypothesis, the same
argument with $1-\varepsilon$ in place of $1/8-\varepsilon$ shows that
the lower density is $1$; therefore $\mathcal S$ has density $1$.

 \appendix

\section{Deferred Proofs}
\label{sec:deferred}
\subsection{Proof of Claim~\ref{claim:vanishing}}
\label{sec:vanishing}
\VANISHING*
\begin{proof}
  Suppose that $R_i(a)\alpha_i^a\sigma(\alpha_i)^q=0$.
The constant term of the characteristic polynomial $\Psi$ is
$-a_{k-1}\ne0$, so every characteristic root is non-zero. Since $a>0$
and every coefficient occurring in $R_i$ is
non-zero, a sub-polynomial consisting of a single monomial cannot
vanish at $a$. Thus $R_i$ has at least two monomials.
Write
 \begin{gather*}
R_i(x) =  b_{i_0}x^{i_0}+b_{i_1} x^{i_1}+\cdots+b_{i_t} x^{i_t},
 \end{gather*}
 where $t\ge 1$, $0\le i_0<i_1<\cdots<i_t\le \nu_i-1$ and $b_{i_0},\ldots,b_{i_t}$ are nonzero algebraic integers. Simplifying across by $x^{i_0}$, we may assume that $x=a$ is a root of 
 \begin{gather*}
 b_{i_0}+b_{i_1} x^{i_1-i_0}+\cdots+b_{i_t}x^{i_t-i_0}.
 \end{gather*}
 But then $a$ divides the norm of $b_{i_0}$, a nonzero integer of
 size  at most $(k^2A)^{(2k^2+1) k!}$. Since $a\in B(X)$, this implies that  
\begin{gather*}
 \frac{\log X}{\sqrt{\log_3 X}}<(kA)^{4(k+2)!},
 \end{gather*}
and  so 
\begin{gather*}
 \log X<2(kA)^{4(k+2)!}
 \log((kA)^{2(k+2)!})<(kA)^{8(k+2)!}<\exp(8(k+2)^{k+2}\log(kA)) \, .
 \end{gather*}
 This implies that
\begin{gather*}
 X<\exp_3((k+2)\log (k+2)+\log(8\log(kA)))
\end{gather*}
Since
$(k+2)\log(k+2)+\log(8\log(kA))\le(20\,k!\log(kA))^2$, it follows from
Lemma~\ref{lem:absorb} that $2X\le\mathcal T$, as required.
\end{proof}

 \subsection{Proof of Claim~\ref{claim:heightbound}}
 \label{sec:technical}

 \TECHNICAL*
\begin{proof}
  Suppose that $2X > \mathcal T$.
We must show that $h(\boldsymbol y)\leq (1+h(\boldsymbol x))\varepsilon$.
  We make a case distinction on the value of $h(a)$.
  We first suppose, for a contradiction, that $h(a)<3k^2\log(k^2 A)$. 
Since $a \geq \frac{\log
  X}{\sqrt{\log_3 X}}$, we have
\begin{gather*}
  \log X <  a \sqrt{\log_3 X} \leq
  \exp(3k^2\log(k^2 A))  {\sqrt{\log_3 X}}.
\end{gather*}
Put $Y:=\log X$ and $U:=\exp(3k^2\log(k^2A))$.  The function
$y/\sqrt{\log\log y}$ is increasing for $y\geq e^e$.  Moreover,
$U\geq40^{12}$ and
\[
 \frac{2U\log U}{\sqrt{\log\log(2U\log U)}}>U.
\]
Hence the preceding inequality $Y<U\sqrt{\log\log Y}$ implies
$Y<2U\log U$.  Therefore
\begin{gather*}
\log X<6k^2\log(k^2 A)\exp (3k^2(\log k^2 A))< \exp(6k^2\log (3k^2 A)),
\end{gather*}
and so 
\begin{gather*}
X<\exp_2(6k^2\log(3k^2 A)). 
\end{gather*}

Since
$\exp_2(6k^2\log(3k^2A))=\exp_3\bigl(\log(6k^2\log(3k^2A))\bigr)$ and
$\log(6k^2\log(3k^2A))\le(20\,k!\log(kA))^2$, Lemma~\ref{lem:absorb}
gives $2X\le\mathcal T$, contradicting the hypothesis of the claim.
So the assumption $h(a)<3k^2\log(k^2 A)$
leads to a contradiction.

Next, we assume that $h(a) \geq 3k^2\log (k^2A)$.  Using the height
bound on the coefficients of the $R_i$ given in~\eqref{eq:height_coef}
and Item~6 of Proposition~\ref{thm:height_bounds} we have that for all
$i \in \{1,\ldots,\ell\}$,
\begin{gather}
    h(R_i(a))  \leq  k ((2k^2+1)\log(k^2A)+\log 2+ h(a)) \leq  2kh(a) \, .
  \label{eq:XX}
\end{gather}
Recall that
$\boldsymbol y=(R_2(a)/R_1(a),\ldots,R_\ell(a)/R_1(a))$.
Using the standard coordinatewise estimate
$h(z_1,\ldots,z_m)\leq \sum_{j=1}^m h(z_j)$, together with
$h(\alpha/\beta)\leq h(\alpha)+h(\beta)$, Equation~\eqref{eq:XX},
and $\ell\leq k$, we obtain
\begin{align*}
h({\boldsymbol y})
&\leq \sum_{i=2}^{\ell} h\left(R_i(a)/R_1(a)\right) \\
&\leq \sum_{i=2}^{\ell}\left(h(R_i(a))+h(R_1(a))\right) \\
&\leq 4k(\ell-1)h(a)
 \leq 4k^2h(a)
 =4k^2\log a
 <4k^2\log_2 X \, .
\end{align*}

Next we give a lower bound on $h({\boldsymbol x})$.  First observe that,
for any two distinct characteristic roots $\alpha_i$ and $\alpha_j$, the
quotient $\alpha_i/\alpha_j$ is not a root of unity by the
non-degeneracy of $\boldsymbol u$, and it has degree at most $k^2$.
If this quotient has degree at least two, Item~2 of
Proposition~\ref{thm:height_bounds} therefore applies.  If it has degree one,
then it is a non-zero rational number different from $\pm1$, and hence its
height is at least $\log 2$.  Since $k\geq2$ and
$\log(3k^2)\geq\log 12>2$, we have
\[
\frac{2}{k^2\bigl(\log(3k^2)\bigr)^3}\leq\frac1{16}<\log2.
\]
Consequently, in all cases,
\begin{gather}
h(\alpha_i/\alpha_j)\geq
 \frac{2}{k^2\bigl(\log(3k^2)\bigr)^3}
 \qquad (i\neq j).
\label{eq:dob}
\end{gather}

Recall that
$\boldsymbol{x}=\boldsymbol{\gamma}^{a+\eta q}$, where
$\boldsymbol{\gamma}=(\alpha_2/\alpha_1,\ldots,
\alpha_\ell/\alpha_1)$, and put $\beta:=\alpha_2/\alpha_1$.
For $X>55$ we have
\[
q\leq\sqrt{\log X}<\frac{\log X}{2\sqrt{\log_3 X}}
\qquad\text{and}\qquad
a\geq\frac{\log X}{\sqrt{\log_3 X}},
\]
and hence
$a+\eta q\geq a-q>\log X/(2\sqrt{\log_3 X})>0$.
The standard coordinate lower bound for affine height, together with
Item~3 of Proposition~\ref{thm:height_bounds}, now gives
\begin{align*}
h({\boldsymbol x})
&\geq h\bigl(\beta^{a+\eta q}\bigr)
 =(a+\eta q)h(\beta)
 \geq(a-q)h(\beta)\\
&\geq \left(\frac{\log X}{2\sqrt{\log_3 X}}\right)h(\beta)
 \geq \sqrt{\log X}\,h(\beta)
 \qquad {\text{\rm for}}\quad X>55,
\end{align*}
since $\log X/(2\sqrt{\log_3 X})>\sqrt{\log X}$ for $X>55$.
Combining this with~\eqref{eq:dob}, we obtain
\[
h({\boldsymbol x})\geq
 \frac{2\sqrt{\log X}}
 {k^2\bigl(\log(3k^2)\bigr)^3}.
\]
Suppose for a contradiction that $h({\boldsymbol y}) > (1+h({\boldsymbol
  x}))\varepsilon$.  Then, combining the upper bound on
$h(\boldsymbol y)$ and the lower bound on $h(\boldsymbol x)$, we get
\begin{gather*}
2k^4\bigl(\log(3k^2)\bigr)^3 (8k)^{6k^3}\log_2 X
> \sqrt{\log X} \, .
\end{gather*}
Putting
\[
c:=2k^4\bigl(\log(3k^2)\bigr)^3(8k)^{6k^3},
\]
we may apply Corollary~\ref{cor:elementary}: for $k\ge2$,
using $\log k\leq k$, $\log\log(3k^2)\leq k$, and $\log(8k)\leq2k$,
\begin{gather*}
\log c \, =\, \log2+4\log k+3\log\log(3k^2)+6k^3\log(8k) \, <\, 20k^4,
\end{gather*}
so the Corollary yields $2X<\exp_5(10^{10}k^6)\le\mathcal T$, contrary
to the hypothesis of the claim.  We conclude that
$h(\boldsymbol{y}) \leq (1+h(\boldsymbol{x}))\varepsilon$.
\end{proof}

\subsection{Proof of Claim~\ref{claim:easy}}
\label{sec:easy}
\EASY*
\begin{proof}
Suppose that \eqref{eq:11} fails at two distinct
exponent values $a+\eta q\ne a'+\eta q'$.
For the vectors ${\boldsymbol x},{\boldsymbol y}$ and
${\boldsymbol x}',{\boldsymbol y}'$ respectively corresponding to
$(q,a)$ and $(q',a')$, we have
$h({\boldsymbol y})>(1+h({\boldsymbol x}))\varepsilon$ and
$h({\boldsymbol y}')>(1+h({\boldsymbol x}'))\varepsilon$.
Under the hypothesis
$2X>\mathcal T$, the first case in the
proof of Claim~\ref{claim:heightbound} shows that both $h(a)$ and
$h(a')$ are at least $3k^2\log(k^2A)$, so Equation~\eqref{eq:XX}
applies to both $a$ and $a'$; moreover $h(a),h(a')\leq\log_2X$ for
sufficiently large $X$.
Applying the coordinatewise estimate used in the proof of
Claim~\ref{claim:heightbound}, with $R_2$ in place of $R_1$, gives
$h({\boldsymbol y})\leq 4k^2\log a$ and
$h({\boldsymbol y}')\leq 4k^2\log a'$.  Adding the two failed height
inequalities, we therefore get
\begin{align}
\label{eq:morethan1}
8k^2\log_2 X
& >  4k^2\log a+4k^2\log a'
\geq h({\boldsymbol y})+h({\boldsymbol y}')
> (2+h({\boldsymbol x})+h({\boldsymbol x'}))\varepsilon\nonumber\\
& \geq  (2+h({\boldsymbol x}/{\boldsymbol x}'))\varepsilon.
\end{align}
For the right--most inequality above see equation (7.6) in \cite{ESS}.  But in ${\boldsymbol x}/{\boldsymbol x'}$, the unknown vector ${\bf A}$ is gone and 
\begin{gather*}
h({\boldsymbol x}/{\boldsymbol x}')=h((\alpha_3/\alpha_2)^{(a+\eta q)-(a'+\eta q')},\ldots,(\alpha_{\ell'}/\alpha_2)^{(a+\eta q)-(a'+\eta q')}). 
\end{gather*}
In particular, by~\eqref{eq:dob} and
the absence of correlated representations,
$|(a+\eta q)-(a'+\eta q')|\geq\sqrt{\log X}$, we have
\begin{align*}
h({\boldsymbol x}/{\boldsymbol x'}) & \ge  |(a+\eta q)-(a'+\eta q')| \min \{h(\alpha_i/\alpha_j): i\ne j\in \{2,3,\ldots,\ell'\}\}\\
& \ge  \frac{2\sqrt{\log X}}{k^2(\log(3k^2))^3} \,  .
\end{align*}
So the estimate~\eqref{eq:morethan1} leads to
\begin{gather*}
8k^2\varepsilon^{-1}\log_2 X
> \frac{2\sqrt{\log X}}{k^2\bigl(\log(3k^2)\bigr)^3} \, ,
\end{gather*}
and hence $\sqrt{\log X}<c\log_2 X$
 for
\[
c:=4k^4\bigl(\log(3k^2)\bigr)^3(8k)^{6k^3}.
\]
As in the proof of Claim~\ref{claim:heightbound} we have
$\log c<20k^4$ for $k\ge2$, so Corollary~\ref{cor:elementary} yields
$2X<\exp_5(10^{10}k^6)\le\mathcal T$, which contradicts the hypothesis
$2X>\mathcal T$.

\end{proof}

\section*{Acknowledgements}

{\sloppy
Florian Luca is also affiliated with the Max Planck Institute for Software Systems,
Saarbr\"ucken, Germany. Florian Luca and Jo\"el Ouaknine are supported
by ERC grant DynAMiCs (101167561).
James Maynard is supported by ERC grant PraDa (851318).
Armand Noubissie is supported by FWF grant DOI 10.55776/ESP1571325.
Jo\"el Ouaknine is also affiliated with Keble College, Oxford as an
\texttt{emmy.network} Fellow, and supported by DFG grant 389792660 as part of
TRR 248. James Worrell is supported by UKRI Fellowship EP/X033813/1.\par}

\bibliographystyle{alpha}

\bibliography{universal}

\end{document}